\definecolor{myblue}{rgb}{0,0,1}
\let\OldLambda\lambda
\renewcommand\lambda{\boldsymbol{\OldLambda{}}}
\def\l{\left(}
\def\r{\right)}
\def\K{\mathcal{K}}
\def\j{\jmath}
\def\DE{\stackrel{\mathrm{def}}{=}}
\newtheorem{theorem}{Theorem}
\newtheorem{definition}{Definition}
\newtheorem{proposition}{Proposition}
\newcommand{\EQc}[1]{\stackrel{(\ref{#1})}{=}}
\def\ind{\mathbbmtt{1}}
\def\bsl{\boldsymbol{\Lambda}}
\def\vbl{{\sf{V}}_{\sf{BL}}}
\def\vfs{{\sf{V}}_{\sf{FS}}}
\def\bsls{ \boldsymbol{\Lambda}_{\sf S} }
\def\bsll{ \boldsymbol{\Lambda}_{\sf L} }
\def\bslsi{\boldsymbol{\Lambda}_\mathsf{S}^{\textsf{inv}}}
\def\PH{\Phi_{\bsls}}
\def\PHI{\Phi_{\bslsi}}
\def\iS{{\mathscr{T}_{\bslsi}}}
\def\SC{*_{\bsls}}
\def\SCI{*_{\bslsi}}
\def\SL{{\sf SL}_2\l \mathbb{R}\r}
\def\K{{\color{black}K_{\bsls}}}
\def\Ki{{\color{black}K^*_{\bsls}}}
\def\CB{{\color{black}K_{\bslsi}}}
\def\j{{\jmath}}
\def\DE{\stackrel{\mathrm{def}}{=}}
\newcommand{\RB}[2]{#1\l #2 \r}
\newcommand{\FT}[1]{ \mathscr{T}_{ \boldsymbol{\Lambda}_{{\mathsf {FT}}} } \left[ #1 \right]}
\newcommand{\SAFT}[1]{ \mathscr{T}_{ \boldsymbol{\Lambda}_{\sf S} } \left[ #1 \right]}
\newcommand{\FTN}[2]{ \mathscr{T}^{\l  #2 \r}_{ \boldsymbol{\Lambda}_{{\mathsf {FT}}} } \left[ #1 \right]}
\newcommand{\saft}[1]{\widehat{#1}_{\boldsymbol{\Lambda}_\mathsf{S}} \left( \omega \right) }
\newcommand{\DEq}[1]{\stackrel{(\ref{#1})}{=}}
\newcommand{\DA}[1]{\stackrel{(\mathrm{#1})}{=}}
\newcommand{\nvec}[1]{ {{\mathbf{#1}}}}
\newcommand{\saftL}[2]{ \mathscr{T}
_{ \boldsymbol{\Lambda}_{{\mathsf S}_{#2}} } \left[ #1 \right]}
\newcommand{\saftC}[1]{ \mathscr{T}_{ \boldsymbol{\Lambda}_{{\mathsf S}_{#1}} }}
\newcommand{\upo}[1]{ {\overset{\lower0.5em\hbox{$\smash{\scriptscriptstyle\rightharpoonup}$}} {{#1}}}}
\newcommand{\dno}[1]{ {\overset{\lower0.5em\hbox{$\smash{\scriptscriptstyle  \leftharpoonup}$}} {{#1}}}}
\newcommand{\up}[2]{ {\overset{\lower0.5em\hbox{$\smash{\scriptscriptstyle\rightharpoonup}$}} {{#1}}} \left( {#2} \right)}
\newcommand{\dn}[2]{ {\overset{\lower0.5em\hbox{$\smash{\scriptscriptstyle  \leftharpoonup}$}} {{#1}}} \left( {#2} \right)}
\newcommand{\GT}[4]{\widehat{S}_{\boldsymbol{\Lambda}_\mathsf{S}}^{{#1},{#2}}\left( {{#3} ,{#4}} \right)}
\newcommand{\sincD}[1]{{\operatorname{sinc} _\Delta }\left( #1 \right)}
\newcommand{\app}[1]{{\breve{#1}}}
\newcommand{\etal}{\textit{et al}.~}
\newcounter{chem}
\newcounter{temp}
\newenvironment{chequation}{%
  \setcounter{temp}{\value{equation}}%
  \setcounter{equation}{\value{chem}}%
}{%
  \setcounter{chem}{\value{equation}}%
  \setcounter{equation}{\value{temp}}%
}
\begin{document}

\title{Sampling and Super-resolution of Sparse Signals Beyond the Fourier Domain}

\author{Ayush~Bhandari and~Yonina~C.~Eldar
\thanks{Preliminary ideas leading to this manuscript were presented in parts at ICASSP 2015 \cite{Bhandari:2015} and 2016 \cite{Bhandari:2016}.}
\thanks{A. Bhandari is with the Massachusetts Institute of Technology, Cambridge, MA 02139-4307 USA. e-mail: \textrm{ayush@MIT.edu}.}% <-this % stops a space
\thanks{Y. Eldar is with the Technion--Israel Institute of Technology, Haifa 32000, Israel. e-mail: \textrm{yonina@ee.technion.ac.il}.}% <-this % stops a space
\thanks{The work of Y. Eldar was funded by the European Union's Horizon 2020 research and innovation program under grant agreement ERC-BNYQ, by the Israel Science Foundation under Grant no. 335/14, and by ICore: the Israeli Excellence Center `Circle of Light'.}
\thanks{Manuscript received Month XX, 20XX; revised Month XX, 20XX.}}

% The paper headers
\markboth{Journal of \LaTeX\ Class Files,~Vol.~Xx, No.~X, Month~20XX}%
{Authors \MakeLowercase{\textit{et al.}}: IEEEtran.cls for IEEE Journals}

% make the title area
\maketitle

\begin{abstract}
Recovering a sparse signal from its low-pass projections in the Fourier domain is a problem of broad interest in science and engineering {and is commonly referred to as super-resolution}. In many cases, however, Fourier domain may not be the natural choice. For example, in holography, {low-pass projections of sparse signals are obtained in the Fresnel domain}. Similarly, time-varying system identification relies on {low-pass projections on the space of linear frequency modulated signals.} In this paper, we study the recovery of sparse signals from low-pass projections in the Special Affine Fourier Transform domain (SAFT). The SAFT parametrically generalizes a number of well known unitary transformations that are used in signal processing and optics. In analogy to the Shannon's sampling framework, we specify sampling theorems for recovery {of} sparse signals {considering} three specific cases: (1) sampling with arbitrary, bandlimited kernels, (2) sampling with smooth, time-limited kernels and, (3) recovery from Gabor transform measurements linked with the SAFT domain. Our work offers a unifying perspective on the sparse sampling problem {which is} compatible with the Fourier, Fresnel and Fractional Fourier domain based results. {In deriving our results}, we introduce the SAFT series (analogous {to} the Fourier series) and the short time SAFT, and study convolution theorems that establish a convolution--multiplication property in the SAFT {domain}.
%Within the field of signal processing, some variations on this theme include, time-delay estimation, sparse deconvolution, super-resolution, multi-path estimation and sparse/FRI sampling. Despite the prevalence of the sparsity assumption, in many cases, Fourier domain may not be the natural choice. For example, in holography, sparsity is associated with the Fresnel transform. Similarly, time-varying system identification relies on sparse sum of linear frequency modulated signals. Consequently, a compelling requirement is to go beyond the classical assumption of Fourier spectrum and consider something more general. This is the theme of our work. 
\end{abstract}

% Note that keywords are not normally used for peerreview papers.
%\begin{IEEEkeywords}
%Finite rate of innovation, fractional Fourier domain, sampling, sparsity, special affine Fourier transform, {super-resolution}.
%\end{IEEEkeywords}

\newpage
\tableofcontents
\newpage

\doublespacing
% As a general rule, do not put math, special symbols or citations
% in the abstract or keywords.

% For peer review papers, you can put extra information on the cover
% page as needed:
% \ifCLASSOPTIONpeerreview
% \begin{center} \bfseries EDICS Category: 3-BBND \end{center}
% \fi
%
% For peerreview papers, this IEEEtran command inserts a page break and
% creates the second title. It will be ignored for other modes.
%\IEEEpeerreviewmaketitle

\section{Introduction}
\label{sec:int}

% The very first letter is a 2 line initial drop letter followed
% by the rest of the first word in caps.
% 
% form to use if the first word consists of a single letter:
% \IEEEPARstart{A}{demo} file is ....
% 
% form to use if you need the single drop letter followed by
% normal text (unknown if ever used by the IEEE):
% \IEEEPARstart{A}{}demo file is ....
% 
% Some journals put the first two words in caps:
% \IEEEPARstart{T}{his demo} file is ....
% 
% Here we have the typical use of a "T" for an initial drop letter
% and "HIS" in caps to complete the first word.
\IEEEPARstart{T}{he} problem of super-resolution deals with recovery {of spikes or Dirac masses from low-pass projections in the Fourier domain. This is a standard problem with numerous applications in science and engineering}. In this setting, the measurements amount to a stream of smooth pulses where the low-pass nature is due to the pulse shape. {This may be an excitation pulse used in time-of-flight imaging such as radar, sonar, lidar and ultrasound. The pulse shape may also be represented as a low-pass filter that is an approximation of,}

\begin{enumerate}[leftmargin=15pt,label=$\bullet$,itemsep = 0pt]
\item point spread function of an optical instrument such as a lens, microscope or a telescope.
\item transfer function of an electronic sensor such as an antenna or a microphone.
\item Green's function of some partial differential equation that represents a physical process (for example, diffusion or fluorescence lifetime imaging).
\item beampattern of a sensor array. % \cite{Fuchs:1994}. 

\item spectral line shape in spectroscopy (often assumed to be a Cauchy, Gaussian or a Voigt distribution). 
\end{enumerate} 

Several well known signal processing {applications involve super-resolution \cite{Donoho:1992,Manabe:1992,Candes:2013} including source localization\cite{Fuchs:1994}, time-delay estimation \cite{Zi-qiang:1982,Pallas:1991,Li:1998a,Gedalyahu:2010a}, sparse deconvolution \cite{Li:2000} and, time-of-flight imaging (e.g.~optical \cite{Hernandez-Marin:2007,Bhandari:2014,Bhandari:2016a}, radar \cite{Bar-Ilan:2014,Rudresh:2017,Bhandari:2017a} and ultrasound \cite{Tur:2011,Burshtein:2016}).} {These applications all address the same challenge:} \emph{``How can one recover a signal with broadband features (spikes) from a given set of narrowband measurements?''}

Our problem can be restated as that of {uniform} sampling and recovery of spikes with a given bandlimited kernel. Unlike bandlimited or smooth signals which follow a linear recovery principle \cite{Unser:2000,Eldar:2015}, sparse signals rely on non-linear recovery {method}. Despite the prevalence of the spike recovery problem across several fields (cf.~Table~I and Table~II in \cite{Pallas:1991}, \cite{Li:1998a} as well as \cite{Li:2000}), the link to sampling theory was established only recently by Vetterli \cite{Vetterli:2002}, Blu\cite{Blu:2008} and co-workers in their study of \emph{finite rate of innovation} or FRI signals. These are signals which are described by countable degrees of freedom, {per unit time and model a broad class of signals. The FRI sampling has been applied to a number of interesting applications} such as channel estimation \cite{Barbotin:2012}, radar \cite{Bar-Ilan:2014,Rudresh:2017,Bhandari:2017a}, time-resolved imaging \cite{Bhandari:2016a,Bhandari:2017a,Bhandari:2016b}, {sparse recovery on a sphere\cite{Deslauriers-Gauthier:2013}}, image feature detection \cite{Chen:2012,Pan:2014,Mulleti:2016}, medical imaging \cite{Onativia:2013,Tur:2011,Burshtein:2016,Dogan:2014}, tomography \cite{Seelamantula:2014}, astronomy \cite{Pan:2017}, spectroscopy \cite{Mulleti:2017}, unlimited sampling architecture \cite{Bhandari:2017b,Bhandari:2018a} and inverse source problems \cite{Murray-Bruce:2017,Pan:2017b}. 

{In either case, super-resolution or sampling of sparse signals, a common feature is that both problems assume a bandlimited kernel.} The choice of Fourier domain for defining bandlimitedness may be restrictive. In practice, many systems and physical phenomena are modeled as linear and time-varying/non-stationary. On the other hand, complex exponentials---that are constituent components of the Fourier transform---are the eigenfunctions of linear time-invariant systems. %As a result, the choice of Fourier domain may be sub-optimal in certain applications.
Polynomial phase models \cite{Peleg:1991,Yuan:2012,Amar:2010} that generalize complex exponentials are often used as an alternative basis for modeling time-varying systems. Such models are specified by basis functions of {the} form ${e^{\jmath \varphi \left( t \right)}}$. One notable example is that of quadratic chirps which are specified by $\varphi \left( t \right) = {a_2}{t^2} + {a_1}t + {a_2}$. Due to their wide applicability, chirp based transformations \cite{Mann:1995}, multi-scale orthonormal bases and frames \cite{Baraniuk:1993} as well as dictionary based pursuit algorithms \cite{Gribonval:2001} have been {derived} in the literature. Active imaging systems such as radar \cite{Engen:2011} and sonar \cite{Schock:2004} use chirps for probing the environment. In \cite{Martone:2001}, Martone demonstrates the {use} of polynomial phase basis functions of the fractional Fourier transform for multicarrier communication with time-frequency selective channels. Harms \etal \cite{Harms:2015a} use chirps for identification of linear time-varying systems. Besides chirps, Fresnel transforms \cite{Gori:1981} use polynomial phase representation for digital holography \cite{Chacko:2013} and diffraction. Other applications of polynomial phase functions include time-frequency representations \cite{Pei:2007a}, DOA estimation \cite{Yuan:2012}, sensor array processing \cite{Yetik:2003,Amar:2010}, ghost imaging \cite{Setala:2010}, image encryption \cite{Unnikrishnan:2000} and quantum physics \cite{Huang:2011}. 

Polynomial phase representations were {also} studied in the context of phase space and mathematical physics. This led to {the} development of unitary transformations such as the fractional Fourier transform (FrFT) \cite{Condon:1937} and the Linear Canonical Transform (LCT) \cite{Moshinsky:1971,Healy:2016}. These transformations generalize the Fourier transform {in the same way that} polynomial phase {functions} ${e^{\jmath \varphi \left( t \right)}}$ generalizes the complex exponentials, or ${e^{\jmath \omega t}}$. 

In the area of signal processing, Almeida first introduced the fractional Fourier transform (FrFT) as a tool for time-frequency representations \cite{Almeida:1994}. Following \cite{Almeida:1994}, a number of papers have extended the Shannon's sampling theorem to the FrFT domain (cf.\cite{Tao:2008} and \cite{Bhandari:2012} and references there in). In \cite{Bhandari:2012}, Bhandari and Zayed developed the shift-invariant model for the FrFT domain which was later extended in \cite{Shi:2012,Bhandari:2017}. Sampling of sparse signals in the FrFT domain was studied in \cite{Bhandari:2010}. {Interestingly, all of the aforementioned transformations and corresponding basis functions are specific cases of the Special Affine Fourier Transform (SAFT). }

%%
%%%%%%%%%%%%%%%%%%%%%%%%%%%%
%%%%%%%%%%%%%%%%%%%%%%%%%%%%
%%\begin{figure*}[!t]
%% \centering
%%\begin{equation}
%%\label{saftkernel}
%%{\kappa_{\bsls} }\left( {t,\omega } \right) = \K^*\exp \left( { - \frac{\j}{{2b}}\left( {a{t^2} + d{\omega ^2} + 2t\left( {p - \omega } \right) - 2\omega \left( {dp - bq} \right)} \right)} \right), \qquad \K = \frac{1}{{\sqrt {\jmath 2\pi b} }}\exp \left( {\jmath\frac{{{dp^2}}}{{2b}}} \right)
%%\tag{10}
%%\end{equation}
%%\hrule
%%\end{figure*}
%%%%%%%%%%%%%%%%%%%%%%%%%%%%
%%%%%%%%%%%%%%%%%%%%%%%%%%%%
%\subsection{Contributions}
%%In the spirit of ``sampling theory beyond Fourier domain,'' i

%\subsection{Contributions}
In this paper, our goal is to extend sampling theory of sparse signals beyond the Fourier domain. We do so by considering the SAFT {which} is a parametric transformation {that} subsumes a number of well known unitary transformations used in signal processing and optics. {We recently studied sampling theory of bandlimited and smooth signals in the SAFT domain in \cite{Bhandari:2017}.} By using results developed in \cite{Bhandari:2017}, {here we derive} sampling theorems for sparse signals with three distinct flavors: 
\begin{enumerate}[leftmargin=45pt,label={\arabic*)},itemsep = 0pt]
\item Sampling with arbitrary, bandlimited kernels. 
\item Sampling with smooth, time-limited kernels.
\item Sparse signal recovery from Gabor transform measurements linked with the SAFT domain.
\end{enumerate}
For this purpose, we introduce two mathematical tools:
\begin{enumerate}[leftmargin=15pt,label=$\bullet$,itemsep = 0pt]
\item the Special Affine Fourier Series (a generalization of the Fourier Series) for representing time-limited functions. 
\item the Gabor transform associated with the SAFT (a generalization of the usual Gabor transform). 
\end{enumerate}

%\subsection{Scope and Organization of this Paper}
We {begin} with the definition of the SAFT in Section~\ref{sec:SAFT}. The forward transform, its inverse as well as geometric properties are discussed in the subsections that follow. In order to develop sparse sampling theory for the SAFT, in Section~\ref{subsec:SAFTConv} we {recall convolution operators for the SAFT domain \cite{Bhandari:2018}. This allows us to establish the link between convolution and low-pass orthogonal projection operators. We then recall Shannon's sampling theorem for SAFT bandlimited functions\cite{Bhandari:2018} in Section~\ref{sec:ShannonSAFT}.} Our main results on sparse sampling theory are presented in Section~\ref{sec:main} where we discuss three cases. Unlike the Fourier basis functions, the SAFT counterparts are aperiodic. As a workaround, in Section~\ref{subsec:SAFS}, we develop the Special Affine Fourier Series (SAFS) for time-limited signals. The SAFS is then used to represent sparse signals and we conclude this work with several future directions in Section~\ref{sec:conclusion}.

%\subsection{Notation}
Throughout the paper, set of integers, reals and complex numbers is denoted by $\mathbb Z, \mathbb R$, and $\mathbb C$, respectively and $\mathbb{Z}^+$ denotes a set of positive integers. Continuous-time functions are denoted by $f\l t \r, t\in \mathbb{R}$ while $f\left[ m\right], m\in\mathbb{Z}$ are used for their discrete counterparts. We use script fonts for operators, that is, $\mathscr{O}f$. For instance, $\mathscr{P}$ denotes the projection operator and the derivative operator of order $k$ is written as $\mathscr{D}^k f = f^{\l k \r}$. Function/operator composition is denoted by $\circ$. We use boldface font for representing vectors and matrices, for example $\nvec{x}$ and $\mathbf{X}$, respectively, and {$\mathbf{X}^\top$ is the matrix transpose}. We use $\mathbf{I}$ to denote an identity matrix. A characteristic function on domain $\cal{D}$ is denoted by $\ind_\mathcal{D}$. Dirac distribution is represented by $\delta\l t \r, t\in \mathbb{R}$. All operations linked with $\delta$ are treated in terms of distributions. The Kronecker delta is represented by {$\delta\left[m\right], m\in\mathbb{Z}$}. The space of square-integrable and absolutely integrable functions is denoted by $L_2$ and $L_1$, respectively and $\left\langle {f,g} \right\rangle  = \int {f{g^*}}$ is the $L_2$ inner-product. We use $\overline{\l\cdot\r}$ to denote time-reversal.

%<><><><><><><><><><><><><><><><><><><><><><><><><><><>
 % 		PARAMETERS
 %<><><><><><><><><><><><><><><><><><><><><><><><><><><>
\begin{table}[t]
\footnotesize
\centering
%\caption{Linear Canonical Transform as a generalization of other well known transformations.}
\caption{ \textrm{SAFT, Transformations and Operations}}
\begin{tabular*}{0.665\textwidth}{p{4cm}  p{5.8cm}}%{ll}
\toprule
%\addlinespace
%\multicolumn{2}{c}{%
%$\boxed{\phi _{\pmb{\Lambda}}  \left( {t,\omega } \right) = \tfrac{1}
%{{\sqrt {-j2\pi b} }}\exp \left\{ { - \tfrac{j}
%{{2b}}\left( {\left( {at^2  + d\omega ^2 } \right) - 2\omega t} \right)} \right\} \qquad (1)}$} \\  
%\addlinespace
%(r){1-2}
%
\rowcolor{yellow!40} 
\hline
{ SAFT Parameters} $\left(\bsls \right) $ & { Corresponding Transform} \\
\hline
\addlinespace

$\bigl[ \begin{smallmatrix} &0 &1& \vline & & {0}  \\-&1&0 & \vline & & {0} \end{smallmatrix} \bigr] 
= \pmb\Lambda_\textsf{FT}$  													&  \textbf{Fourier Transform (FT)}   \\ 		[2.5pt] 

$\bigl[ \begin{smallmatrix} &0 &1& \vline & & {p}  \\-&1&0 & \vline & & {q} \end{smallmatrix} \bigr] 
= \pmb\Lambda_\textsf{FT}^\mathsf{O}$  	&\textbf{Offset Fourier Transform}  \\		[2.5pt] 

$\bigl[ \begin{smallmatrix} &\cos\theta&\sin\theta  & \vline & & {0}   \\
 -&\sin\theta&\cos\theta & \vline& &{0} \end{smallmatrix} \bigr] = \pmb\Lambda_\theta $				&  \textbf{Fractional Fourier Transform (FrFT)} \\	[2.5pt] 
%      			& each      \\

$\bigl[ \begin{smallmatrix} &\cos\theta&\sin\theta  & \vline & & {p}   \\
 -&\sin\theta&\cos\theta & \vline& &{q} \end{smallmatrix} \bigr] = \pmb\Lambda_\theta^\mathsf{O} $				&  \textbf{Offset Fractional Fourier Transform} \\	[2.5pt]

$\bigl[ \begin{smallmatrix} a &b  & \vline & & {0}   \\
 c& d & \vline& &{0} \end{smallmatrix} \bigr] = \pmb\Lambda_\textsf{L} $				&  \textbf{Linear Canonical Transform (LCT)} \\	[2.5pt] 

$\bigl[ \begin{smallmatrix} 1 & b & \vline & & {0}  \\0&1 & \vline & & {0} \end{smallmatrix} \bigr] = \pmb\Lambda_\textsf{Fr}$  						& \textbf{Fresnel Transform}  \\	[2.5pt] 

$\bigl[ \begin{smallmatrix} 0 & \j & \vline & & {0} \\ \j & 0 & \vline & & {0} \end{smallmatrix} \bigr] 
= \pmb\Lambda_\textsf{LT}$ & \textbf{Laplace Transform (LT)}  \\	[2.5pt] 
$\bigl[ \begin{smallmatrix} \j \cos\theta &\j \sin\theta & \vline & & {0}  \\
 \j \sin \theta & -\j\cos\theta & \vline & & {0} \end{smallmatrix} \bigr] $  								& \textbf{Fractional Laplace Transform}   \\ 	[2.5pt] 
$\bigl[ \begin{smallmatrix} 1&\jmath b & \vline & & {0}  \\  \jmath&1 & \vline & & {0} \end{smallmatrix} \bigr]$				& \textbf{Bilateral Laplace Transform}  \\	[2.5pt] 
$\bigl[ \begin{smallmatrix} 1&-\jmath b & \vline & & {0}  \\ 0&1 & \vline & & {0} \end{smallmatrix} \bigr]$, $b \ge 0$  		& \textbf{Gauss--Weierstrass Transform}   \\	[2.5pt] 
$\tfrac{1}{{\sqrt 2 }} \bigl[\begin{smallmatrix} 0 & e^{ - {{\jmath\pi } 
\mathord{\left/{\vphantom {{j\pi } 2}} \right.\kern-\nulldelimiterspace} 2}} & \vline & & {0} 
\\-e^{ - {{\jmath\pi } \mathord{\left/{\vphantom {{j\pi } 2}} \right.\kern-\nulldelimiterspace} 2}} 
&1 & \vline & & {0} \end{smallmatrix} \bigr]$  													& \textbf{Bargmann Transform} \\	[3pt] 
%
%\midrule
\addlinespace
\hline
\rowcolor{yellow!40} 
{SAFT Parameters} $\left(\bsls \right) $ & { Corresponding Signal Operation} \\
\hline
%\midrule
\addlinespace
$\bigl[ \begin{smallmatrix} 1/\alpha& 0  & \vline & & {0}   \\
 0 & \alpha & \vline& &{0} \end{smallmatrix} \bigr] = \pmb\Lambda_\alpha $				&  \textbf{Time Scaling} \\	[2.5pt] 

$\bigl[ \begin{smallmatrix} 1 & 0  & \vline & & {\tau}   \\
 0 & 1 & \vline& &{0} \end{smallmatrix} \bigr] = \pmb\Lambda_\tau $				&  \textbf{Time Shift} \\	[2.5pt]

$\bigl[ \begin{smallmatrix} 1 & 0  & \vline & & {0}   \\
 0 & 1 & \vline& &{\xi} \end{smallmatrix} \bigr] = \pmb\Lambda_\xi $				&  \textbf{Frequency Shift/Modulation} \\	[3pt] 

\addlinespace
\hline
\rowcolor{yellow!40} 
{ SAFT Parameters} $\left(\bsls \right) $ & { Corresponding Optical Operation} \\
\hline
%\midrule
\addlinespace
$\bigl[ \begin{smallmatrix} &\cos\theta&\sin\theta  & \vline & & {0}   \\
 -&\sin\theta&\cos\theta & \vline& &{0} \end{smallmatrix} \bigr] = \pmb\Lambda_\theta $			&  \textbf{Rotation} \\	[2.5pt] 

$\bigl[ \begin{smallmatrix} 1 & 0  & \vline & & {0}   \\
 \tau & 1 & \vline& &{0} \end{smallmatrix} \bigr] = \pmb\Lambda_\tau $				&  \textbf{Lens Transformation} \\	[2.5pt] 
  
 $\bigl[ \begin{smallmatrix} 1 & \eta  & \vline & & {0}   \\
 0 & 1 & \vline& &{0} \end{smallmatrix} \bigr] = \pmb\Lambda_\eta $				&  \textbf{Free Space Propagation} \\	[2.5pt] 
 
 $\bigl[ \begin{smallmatrix} e^{\beta} & 0  & \vline & & {0}   \\
 0 & e^{-\beta}  & \vline& &{0} \end{smallmatrix} \bigr] = \pmb\Lambda_\beta $				&  \textbf{Magnification} \\	[2.5pt] 
 
  $\bigl[ \begin{smallmatrix} \cosh\alpha & \sinh\alpha  & \vline & & {0}   \\
\sinh\alpha & \cosh\alpha  & \vline& &{0} \end{smallmatrix} \bigr] = \pmb\Lambda_\eta $				&  \textbf{Hyperbolic Transformation} \\	[2pt] 
\bottomrule
\end{tabular*}
\label{tab:1}
\end{table}

\section{The Special Affine Fourier Transform}
\label{sec:SAFT}
The Special Affine Fourier Transform or the SAFT was introduced by Abe and Sheridan \cite{Abe:1994} as a generalization of the FrFT. The SAFT can be thought of as a {versatile} transformation which parametrically generalizes a number of well known {unitary and non-unitary} transformations as well as mathematical and optical operations. In Table~\ref{tab:1} we list its parameters together with the associated mappings. 

\subsection{Forward Transform}
Mathematically, the SAFT of a signal $f(t)$ is a mapping, $\mathscr{T}_{\textrm{SAFT}} : f \to \widehat{f}_{\bsls}$
%\begin{equation}
%\label{fwd}
%
%\end{equation}
 which is defined by an integral transformation parameterized by a matrix $\bsls$
%%%%%%%%%%%%%%%%%%%%%%%%%%
%\begin{empheq}[box={\YellowboxSmall[\color{red} \scriptsize \textsf{Special Affine Fourier Transform (SAFT)}]}]
\begin{align}
\label{saft}
\mathscr{T}_{\bsls} \left[ f \right] & = {\widehat f_{\bsls} }\left( \omega  \right) =  \begin{cases}
  {\left\langle {f,{\kappa _{\bsls} }\left( { \cdot ,\omega } \right)} \right\rangle }&{b \ne 0} \\
  {\sqrt d {e^{\j\frac{{cd}}{2}{{\left( {\omega  - p} \right)}^2} + \j\omega q}}f\left( {d\left( {\omega  - p} \right)} \right)}&{b = 0}.
 \end{cases}
\end{align}
%%%%%%%%%%%%%%%%%%%%%%%%%%
{When $b\neq0$}, the matrix $\bsls^{(2\times3)}$ is the SAFT parameter matrix,
\begin{equation}
\label{abcd}
\bsls = \left[ {\begin{array}{*{20}{c}}
  a&b&\vline & p \\ 
  c&d&\vline & q 
\end{array}} \right] \equiv   \left[ {\begin{array}{*{20}{c}}  \bsll &\vline & \nvec{{\lambda}  } \end{array}} \right],
\end{equation}
which is obtained by concatenating the Linear Canonical Transform or the LCT matrix, 
\[\bsl_\textsf{L} = \left[ {\begin{array}{*{20}{c}}
  a&b \\ 
  c&d 
\end{array}} \right], \quad |\bsl_\textsf{L}| = 1 
{\mbox{ or } ad - bc = 1}
\]
(see Table~\ref{tab:1} and \cite{Moshinsky:1971}), and, an offset vector, $\nvec{\lambda}  = \left[ p \ \ q \right]^\top$ with elements $p$ and $q$ {that} represent displacement and modulation, respectively. {Let $\nvec{r} = [t \ \ \omega]^\top$ denote the time-frequency co-ordinates. The function $\kappa_{\bsls} \left(\nvec{r}\right)$ in \eqref{saft} is the parametric SAFT kernel based on a complex exponential of quadratic form,
\begin{align}
\label{saftkernel}
%& {\kappa_{\bsls} }\left( {t,\omega } \right)  \notag \\ 
%&= \Ki\exp \left( { - \frac{\j}{{2b}}\left( {a{t^2} + d{\omega ^2} + 2t\left( {p - \omega } \right) - 2\omega \left( {dp - bq} \right)} \right)} \right)
{\kappa_{\bsls} }\left( \nvec{r} \right) & = \Ki \exp \left( { - \jmath \left( {{{\mathbf{r}}^ \top }{\mathbf{Ur}} + {{\mathbf{v}}^ \top }{\mathbf{r}}} \right)} \right)
\end{align}
where, 
\[{\mathbf{U}} = \frac{1}{{2b}}\left[ {\begin{array}{*{20}{c}}
  a&{ - 1} \\ 
  { - 1}&d 
\end{array}} \right],{\mathbf{v}} = \frac{1}{b}\left[ {\begin{array}{*{20}{c}}
  p \\ 
  {bq - dp} 
\end{array}} \right]{\text{ and }}\K = \frac{1}{{\sqrt {\jmath 2\pi b} }}\exp \left( {\jmath\frac{{{dp^2}}}{{2b}}} \right).
\]
Both $\mathbf{U}$ and $\nvec{v}$ are parameterized by $\bsls$ and hence the SAFT kernel is also parameterized by $\bsls$. The exponential part of the kernel is explicitly written as, 
\begin{align*}
%& {\kappa_{\bsls} }\left( {t,\omega } \right)  \notag \\ 
 & \exp \left( { - \jmath \left( {{{\mathbf{r}}^ \top }{\mathbf{Ur}} + {{\mathbf{v}}^ \top }{\mathbf{r}}} \right)} \right) = \exp \left( { - \frac{\jmath}{{2b}}\left( {a{t^2} + d{\omega ^2} + 2t\left( {p - \omega } \right) - 2\omega \left( {dp - bq} \right)} \right)} \right).
\end{align*}
}

{Note that $\bsl_\mathsf{L}$ has $3$ free parameters $\{a,b,d\}$ and $c$ is constrained by $|\bsl_\textsf{L}| = 1$.} Due to this concatenation of the LCT matrix with a vector, the SAFT is also referred to as the \textit{Offset Linear Canonical Transform} or the OLCT \cite{Pei:2007}. {The matrix $\bsls$ arises naturally in applications involving optics and imaging. We refer the reader to the books \cite{Gerrard:1975,Healy:2016} for further details on the intuitive meaning of such a matrix representation.} 

%$$ \K = \frac{1}{{\sqrt {\jmath 2\pi b} }}\exp \left( {\jmath\frac{{{dp^2}}}{{2b}}} \right).$$

%
The SAFT of the Dirac distribution is calculated by,
\begin{align}
\label{eqn:diracsaft}
\widehat \delta_{\bsls} \left( \omega  \right)  & \EQc{saft}  \left\langle {\delta ,\kappa_{\bsls} \left( { \cdot ,\omega } \right)} \right\rangle \notag \\
& = {\left. {{\kappa_{\bsls} ^*}\left( {t,\omega } \right)} \right|_{t = 0}} \notag \\
& = {K_b}\exp \left( {\tfrac{\jmath }{{2b}}\left( {d{\omega ^2} - 2\omega \left( {dp - bq} \right)} \right)} \right),
\end{align}
and is non-bandlimited.

\subsection{Inverse Transform}
In order to define the inverse-SAFT, we first note that the SAFT satisfies the following composition property, 
\begin{equation}
\label{P1}
\l \saftC{2}\circ\saftC{1} \r \left[ f \right] = z_0 \saftL{f}{3} 
\end{equation}
where $z_0$ is {a} complex number (phase offset). The elements of the resultant SAFT parameter matrix are specified by, 
%
%%%%%%%%%%%
\begin{equation*}
%\label{P2}
\bsl_{{\sf S}_3}  = \left[ {\begin{array}{*{20}{c}}  \bsl_{{\sf L}_3}&\vline & \nvec{{\lambda}}_3 \end{array}} \right] = \left[ {\begin{array}{*{20}{c}}
  \bsl_{{\sf L}_2} \bsl_{{\sf L}_1} & \vline &  \bsl_{{\sf L}_2}\nvec{{\lambda}}_2 + \nvec{{\lambda}}_1
\end{array}} \right].
\end{equation*}
%%%%%%%%%

In {the} context of phase space, the physical significance of the SAFT parameter matrix is that it maps time-frequency co-ordinates $\nvec{r} = [t \ \ \omega]^\top$ into its affine transformed version, 
\[\left[ {\begin{array}{*{20}{c}}
  t \\ 
  \omega  
\end{array}} \right]\xrightarrow{{{\sf{SAFT}}}}\left[ {\begin{array}{*{20}{c}}
  a&b \\ 
  c&d 
\end{array}} \right]\left[ {\begin{array}{*{20}{c}}
  t \\ 
  \omega  
\end{array}} \right] + \left[ {\begin{array}{*{20}{c}}
  p \\ 
  q 
\end{array}} \right] \  \equiv \ \nvec{r} \xrightarrow{{{\sf{SAFT}}}} \bsl_{{\sf L}}\nvec{r} + \lambda.
\]
%%% 
%%%
Hence, the inverse-SAFT is defined by some affine transform that allows for the mapping, 
\[\left[ {\begin{array}{*{20}{c}}
  {at + bp + p} \\ 
  {ct + d\omega  + q} 
\end{array}} \right]\xrightarrow{{{\sf{Inverse~SAFT}}}}\left[ {\begin{array}{*{20}{c}}
  t \\ 
  \omega  
\end{array}} \right].\]
%%%%%%%%%%%%%%%%
Thanks to the composition property (\ref{P1}), setting, 
\begin{align*}
\bsl_{{\sf S}_3} & = 
\begin{bmatrix}
\bsl_{{\sf L}_3} & \nvec{\lambda}_3
\end{bmatrix}
=
\begin{bmatrix}
\mathbf{I} & \nvec{\mathbf{0}}
\end{bmatrix} \qquad \mbox{(Identity Operation)}\\
& \Rightarrow \bsl_{{\sf L}_2} = \bsl_{{\sf L}_1}^{-1} \quad \mbox{ and } \quad \nvec{{\lambda}}_2  = - \bsl_{{\sf L}_2}\nvec{{\lambda}}_1,
%\begin{cases} & \bsl_{{\sf L}_2} = \bsl_{{\sf L}_1}^{-1} \\ & \nvec{{\lambda}}_2  = - \bsl_{{\sf L}_2}\nvec{{\lambda}}_1,
%\end{cases}
\end{align*}
results in the inverse parameter matrix defining inverse-SAFT which is equivalent to an SAFT with matrix $\bslsi$,
\begin{equation}
\label{invmat}
\bslsi =  \left[ {\begin{array}{*{20}{c}}
  { + d}&{ - b}&\vline & {bq - dp} \\
  { - c}&{ + a}&\vline & {cp - aq}
\end{array}} \right] =
\begin{bmatrix}
\bsl_{{\sf L}}^{-1} &  - \bsl_{{\sf L}}^{-1}\nvec{{\lambda}}
\end{bmatrix}
\end{equation}
{where $c = \frac{{ad - 1}}{b}$}. 
Thus, the inverse transform (iSAFT) is defined as an SAFT with matrix $\bslsi$ in (\ref{invmat}), 
%\begin{empheq}[box={\YellowboxSmall[\color{red} \scriptsize \textsf{Inverse Special Affine Fourier Transform (SAFT)}]}]
\begin{align}
\label{iSAFT}
\iS [\widehat f ]  & = f\l t\r = {\CB}\left\langle {{{\widehat f}_{\bsls}},{\kappa _{\bslsi}}\left( { \cdot ,t} \right)} \right\rangle
\end{align}
where ${\kappa _{\bslsi}}\left( {\omega ,t} \right) = \kappa^*_{\bsls}\l t,\omega \r$ and, % $\CB$ is some constant.
{$$\CB = \exp \left( {\frac{\jmath }{2}\left( {cd{p^2} + ab{q^2} - 2adpq} \right)} \right).$$}

\subsection{Geometry of the Special Affine Fourier Transform}

An intriguing property of the SAFT is its geometrical interpretation in the context of time-frequency representations and the fact that the parameter matrix belongs to a class of area preserving matrices---the ones whose determinant is unity. We elaborate on these aspects starting with the cyclic property of the Fourier transform \cite{Condon:1937}.

\begin{figure}[!t]
\centering
\includegraphics[width =0.75\textwidth]{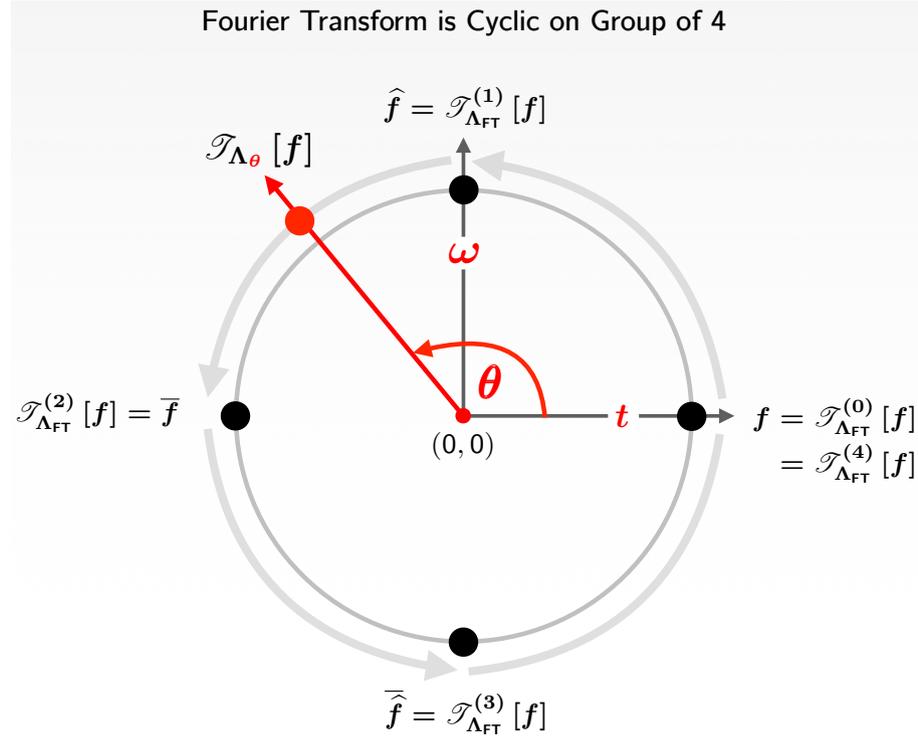}
\caption{Fourier transform is cyclic on a group of $4$, that is, $\mathscr{T}^{\l k + 4 \r}_{ \boldsymbol{\Lambda}_{{\mathsf {FT}}} } = \mathscr{T}^{\l k \r}_{ \boldsymbol{\Lambda}_{{\mathsf {FT}}} }, k\in\mathbb{Z}^+ $ as described in (\ref{FTComp}). {The} Fractional Fourier transform allow{s} for ``fractionalization'' of $k$ so that a version of the Fourier transform may be defined on an arbitrary point on the circle. We denote such as transform {by} $\mathscr{T}_{\bsl_{\theta}}\left[f \right], \theta \in \mathbb{R}$.}
\label{fig:FrFT}
 \end{figure}

Let $\mathscr{T}^{\l 0 \r}_{ \boldsymbol{\Lambda}_{{\mathsf {FT}}} } = \mathrm{I}$ be the identity operation that is, $\mathscr{T}^{\l 0 \r}_{ \boldsymbol{\Lambda}_{{\mathsf {FT}}} } [f] = f$ which we use to define the Fourier operator composition: 
\begin{equation}
\label{FTComp}
\mathscr{T}^{\l k \r}_{ \boldsymbol{\Lambda}_{{\mathsf {FT}}} } = \mathscr{T}^{\l k-1 \r}_{ \boldsymbol{\Lambda}_{{\mathsf {FT}}} } \circ \mathscr{T}_{ \boldsymbol{\Lambda}_{{\mathsf {FT}}} } = \underbrace{\mathscr{T}_{ \boldsymbol{\Lambda}_{{\mathsf {FT}}} } \circ \cdots \circ \mathscr{T}_{ \boldsymbol{\Lambda}_{{\mathsf {FT}}} }}_{k-\sf{times}}, \quad k \in \mathbb{Z}^+.
\end{equation}
Note that:
%\[
%\widehat f = \FTN{f}{1} \quad \bar{f} = \FTN{f}{2} \quad \cdots \quad f = \FTN{f}{4} \equiv %\FTN{f}{0}.
%\]
\[\arraycolsep=0.7cm
\begin{array}{*3l}
  {\widehat f = \FTN{f}{1}} &{\overline{f} = \FTN{f}{2}} \\ [4pt]
  {\overline{\widehat{f}} = \FTN{f}{3}}& \underbrace{f = \FTN{f}{4} \equiv \FTN{f}{0}}_{\sf{Identity \ Operation}}.
\end{array}\]
From the last equality, $\FTN{f}{4} = \FTN{f}{0}$, we conclude that the Fourier operator is periodic with $N=4$. Due to this periodic structure, the Fourier operator can be represented on a circle as shown in Fig.~\ref{fig:FrFT}.

 \begin{figure*}[!t]
    \centering
    \includegraphics[width =1\textwidth]{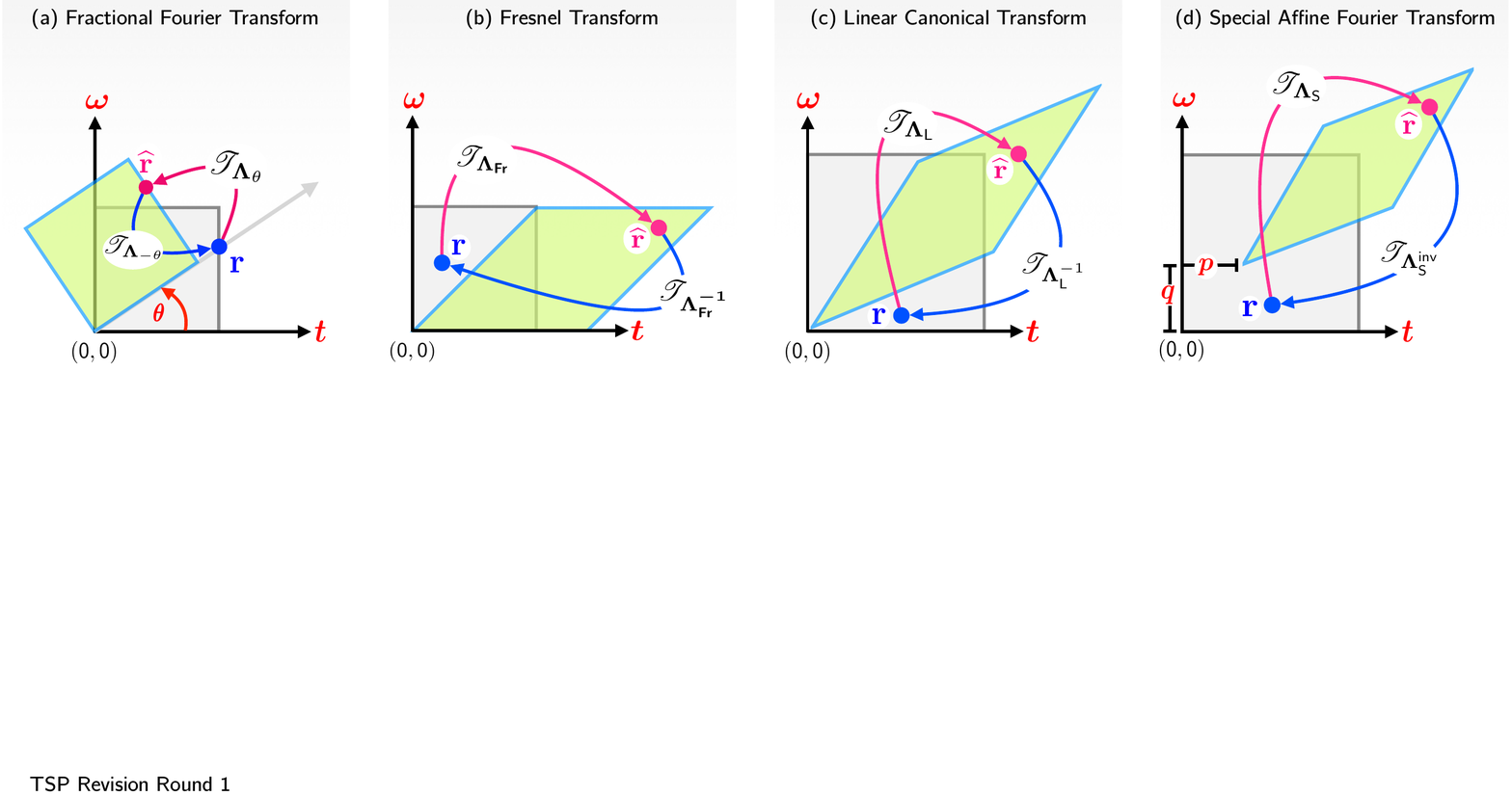}
    \caption{The SAFT maps one convex enclosure into another while preserving area since the transformation matrix $\bsll \in \SL$. The action of $\bsll$ on a time-frequency co-ordinate $\nvec{r} \in \mathbb{R}^2\l\left[ {0,1} \right]\r$ results in $\widehat{\nvec{r}} = \bsll\nvec{r}$. (a) {For the} FrFT, $\bsll = \bsl_{\sf \theta}$ implements rotation. The inverse transform {corresponds to} $\bsl_\theta^{-1}=\bsl_\theta^{\top}$. (b) {For the} Fresnel transform, $\bsll = \bsl_{\sf Fr}$ implements shear. The inverse transform {corresponds to} $\bsl_{\sf Fr}^{-1}$. (c) {For the} Linear Canonical Transform (LCT), $\bsll$ deforms a unit square into an arbitrary parallelogram. The inverse transform {corresponds to $\bsll^{-1}$.} (d) For the SAFT, presence of an offset $\nvec{\lambda} = [p \ \ q]^\top$ results in an affine transform. Consequently, the inverse transform {is} $\nvec{r} = \bsll^{-1} \widehat{\nvec{r}} -  \bsll^{-1}\nvec{\lambda}$.}
    \label{fig:SAFT}
 \end{figure*}

Unitary mappings that can be continuously defined on the circle (as opposed to $k \in \mathbb{Z}^+$) {were} first identified by Condon \cite{Condon:1937}. {This is known as the fractional Fourier transform (FrFT). Qualitatively, the FrFT ``fractionalizes'' the Fourier transform in the sense that $\FTN{f}{k}, k\in\mathbb{Z}^+$ can be defined for an arbitrary point on the circle through $\theta = k \pi/2, k\in \mathbb{Z}$ by the transformation $\mathscr{T}_{\bsl_{\theta}}\left[f \right], \theta \in \mathbb{R}$. We compare the Fourier transform with the FrFT in Fig.~\ref{fig:FrFT}.} As shown in Fig.~\ref{fig:SAFT}(a), the action of the FrFT on the time-frequency co-ordinates $\nvec{r} = [t \ \ \omega]^\top, \nvec{r}  \in \mathbb{R} ( \left[ 0 , 1 \right]^2) $ results in rotation of the time-frequency plane \cite{Bastiaans:1998} due to $\nvec{\widehat r} = \bsl_{\theta} \nvec{r}, \theta \in \mathbb{R}$---an intrinsic property of the FrFT. This is explained by the co-ordinate transformation matrix---the rotation matrix $\bsl_{\theta}$ in case of the FrFT (cf~Table~\ref{tab:1}). 

{The submatrix $\bsll$ of $\bsls$ may be decomposed in several ways. One interesting decomposition relates $\bsll$ to the Fourier transform such that $\bsll =\mathbf{M}_1 \bsl_{\sf FT } \mathbf{M}_2$ where
\[{{\mathbf{M}}_1} = \left[ {\begin{array}{*{20}{c}}
  b&0 \\ 
  d&{1/b} 
\end{array}} \right]{\text{ and }}{{\mathbf{M}}_2} = \left[ {\begin{array}{*{20}{c}}
  1&0 \\ 
  {a/b}&1 
\end{array}} \right]\]
are modulation matrices\footnote{We refer to $\mathbf{M}_1$ and $\mathbf{M}_2$ as modulation matrices because whenever $b=0$ in \eqref{abcd}, the SAFT in \eqref{saft} amounts to modulation of the function $f$.}. This decomposition implies that the SAFT can be implemented as a Fourier transform using the following sequence of steps, 
\begin{align*}
f_1\l x \r  & \DE  \mathscr{T}_{\mathbf{M_2}|\nvec{0}}
\left[ f \right]\l x \r
\EQc{saft} 
{e^{\jmath \frac{a}{{2b}}{x^2}}}f\left( x \right),   \\
\widehat{f_1}\l \xi \r & = \mathscr{T}_{\bsl_{\mathsf{FT}}}  \left[ f_1 \right]\l \xi \r  = \int {{f_1}\left( x \right){e^{ - \jmath \xi x}}dx},\\
\widehat{f}_1\l \omega \r  & \DE  \mathscr{T}_{\mathbf{M_1}|\boldsymbol{\lambda}} 
[ \widehat{f}_1]\l \omega \r 
\EQc{saft} 
\frac{1}{{\sqrt b }}{e^{\jmath \omega q + \jmath \frac{d}{2b}{{\left( {\omega  - p} \right)}^2}}}{\widehat f_1}\left( {\frac{{\omega  - p}}{b}} \right).
\end{align*}
By simplifying $\widehat{f}_1\l \omega\r$, we observe that it is indeed the SAFT of $f\l t \r$. In this way, we generalize the previously known result of Zayed \cite{Zayed:1996} that links the FrFT to the Fourier Transform.}

{An alternative} decomposition relates the SAFT with the FrFT and the Fresnel transform via the elegant Iwasawa Decomposition,
\[ \bsll = \bsl_{\theta}\left[ {\begin{array}{*{20}{c}}
  \Gamma &0 \\ 
  0&{{\Gamma ^{ - 1}}} 
\end{array}} \right] 
\underbrace{\left[ {\begin{array}{*{20}{c}}
  1&u \\ 
  0&1 
\end{array}} \right]}_{\sf{Fresnel~Transform}}, %
\begin{array}{*{20}{l}}
  {\Gamma  = \sqrt {{a^2} + {c^2}} } \\ 
  {u = \left( {ab + cd} \right)/{\Gamma ^2}} 
\end{array}.
\] 
%Geometrical interpretation of all the matrices listed in Table~\ref{tab:1}.
% is left for the readers to explore. These properties are closely tied with the problems in phase space optics \cite{Ozaktas:1993}. For example self-imaging phenomenon \cite{Pei:2002}.
%
%
{In fact, rotation is a special operation of a class of matrices that belong to the special linear group $\SL$ where, 
\[\SL = \left\{ {{\mathbf{A}} = \left[ {\begin{array}{*{20}{c}}
  {{a_1}}&{{a_2}} \\ 
  {{a_3}}&{{a_4}} 
\end{array}} \right]
\left\{ {{a_k}} \right\}_{k = 1}^4 \in \mathbb{R}{\text{ and }}\left| {\mathbf{A}} \right| = 1} \right\}.\]
With the exception of the Laplace, Gauss and Bargmann transforms in Table~\ref{tab:1}, all other operations can be explained by $\bsll \in \SL$ which entails that $ad-bc=1$. Since the basis vectors of $\bsll$ form a parallelogram in $\mathbb{R}^2$, its enclosed area must always be unity or the area must be preserved under application of $\bsll$.} This aspect has important consequences in ray optics where $\bsl_{\sf{L}}$ models paraxial optics \cite{Abe:1994,Gerrard:1975}. In Figs.~\ref{fig:SAFT}(b), \ref{fig:SAFT}(c) and \ref{fig:SAFT}(d) we describe the deformation on $\nvec{r}$ due to $\bsls$ for the Fresnel transform $\l\bsls = \bsl_{\sf{Fr}}\r$, the LCT $\l\bsls = \bsl_{\sf{L}}\r$ and the SAFT. 

Geometrically, the inverse transform relies on specification of $\bslsi$ which undoes the effect of $\bsls$. For the FrFT, the Fresnel transform and the LCT, the operation is simply the inverse of the matrix, that is $\bslsi = \bsls^{-1}, \bsls = \{\bsl_\theta,\bsl_{\sf Fr}, \bsll\}$ (cf. Table~\ref{tab:1}). The case of the SAFT is unique because it implements an affine transform as opposed to the usual case of a linear transform (cf. compare Fig.~\ref{fig:SAFT}(b,c) and Fig.~\ref{fig:SAFT}(d)). The presence of an offset $\nvec{\lambda} = [p \ \ q]^\top$ in (\ref{abcd}) warrants an adjustment by  $- \bsl_{{\sf L}}^{-1}\nvec{{\lambda}}$ (\ref{invmat}) {for} the SAFT.

\subsection{Convolution Structures in the SAFT Domain}
\label{subsec:SAFTConv}
A useful property of the Fourier transform is that the convolution of two functions is equal to the pointwise multiplication of their spectrums. More precisely, $\FT{f*g} = \FT{f}\FT{g}$. However, this property does not extend to the SAFT domain in that $\SAFT{f*g} \neq \SAFT{f}\SAFT{g}$ (cf. \cite{Xiang:2012}). Since convolutions are pivotal to the topic of sampling theory, we will work with a generalized version of the convolution operator, denoted by $\SC$, which allows for a representation of the form $\SAFT{f\SC g} \propto \SAFT{f}\SAFT{g}$.

\begin{definition}[Chirp Modulation] Let $\mathbf{A}=[a_{j,k}]$ be a $2\times 2$ matrix. We define the chirp modulation function as,
\begin{equation}
\label{CP}
m_{\mathbf{A}} \l t \r \DE \exp\l \j \frac{ a_{1 1}}{ 2 a_{12}}  t^2 \r.
\end{equation}
We also define {the} $\mathbf{A}$--parametrized unitary up and down chirp modulation operation, 
\begin{equation}
\up{f}{t}  \DE   m_{\mathbf{A}} \l t \r f\l t\r \ \ \mbox{ and } \ \
\dn{f}{t}  \DE   m^*_{\mathbf{A}} \l t \r f\l t\r, 
\label{cmod}
\end{equation}
respectively. Note that ${\| \up{f}{t}\|^2_{{L_2}}} = {\left\| f\l t\r \right\|^2_{{L_2}}}$.
\label{def:mod}
\end{definition}

Based on the definition of chirp modulated functions, we now define the SAFT convolution operator. 

\begin{definition}[SAFT Convolution] Let $*$ denote the usual convolution operator. Given functions $f$ and $g$, the SAFT convolution operator denoted by $\SC$, is defined as 
\begin{equation}
\label{sconv}
\RB{h}{t} = \RB{\l f \SC g\r}{t} \DE K_b \RB{m^*_{\bsls}}{t} \RB{\l{\upo{f} * \upo{g}}\r}{t},
\end{equation}
where $\up{f}{t} = \RB{m_{\bsls}}{t}\RB{f}{t} \DEq{cmod} e^{\j \tfrac{at^2}{2b}}\RB{f}{t}$; the same applies to the function $g$.
\label{def:saftconv}
\end{definition}
\begin{figure}[!t]
    \centering
    \includegraphics[width =0.7\textwidth]{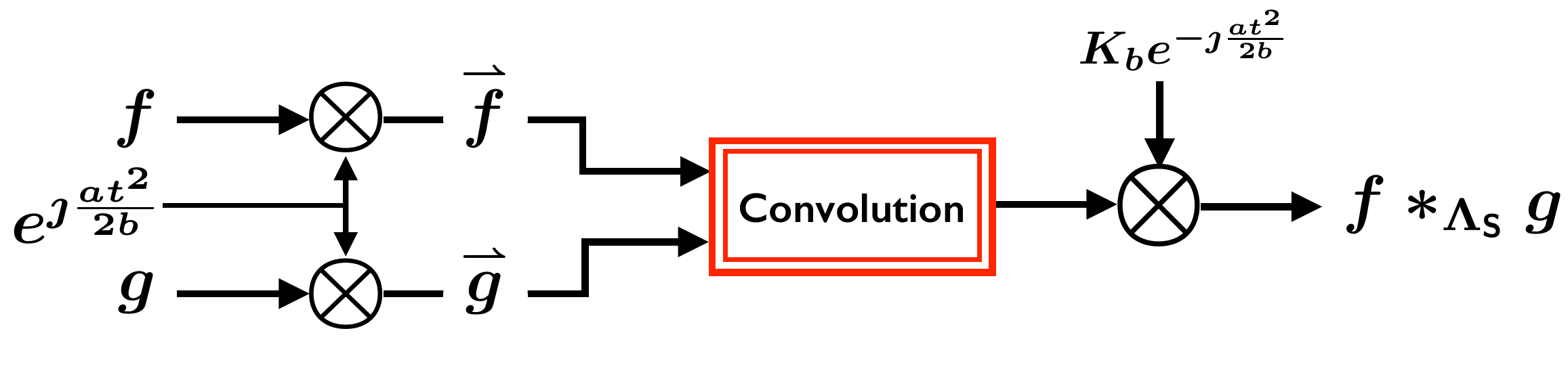}
    \caption{Block diagram for SAFT convolution. We use the usual definition of the convolution operation in conjunction with chirp modulation defined in (\ref{cmod}) to define the SAFT convolution operation.}
    \label{fig:SAFTConv}
 \end{figure}
In Fig.~\ref{fig:SAFTConv}, we explain the SAFT convolution operation defined in (\ref{sconv}). Note that the SAFT convolution operation is based on the usual convolution of pre-modulated functions $\up{f}{t}$ and $\up{g}{t}$. {This operation, also known as \emph{chirping}, is a standard procedure in optical information processing \cite{Ozaktas:1994} and analog processing where it is implemented via mixing circuits (cf.~Fig.~6(a) in \cite{Harms:2015a}). Similarly, in the field of holography, such operations are used for defining Fresnel transforms (cf.~(10) in \cite{Chacko:2013}).} Pre- and post-modulations are critical in our context and enforce the convolution-multiplication property. A formal statement of this result is as follows:
\begin{theorem}[Convolution and Product Theorem \cite{Bhandari:2017,Bhandari:2018}]
\label{thm:saftconv}
Let $f$ and $g$ be two given functions and let $ h\l t \r = \l f\SC g\r \l t \r$ be defined in \eqref{sconv}. Then, 
\[
h\l t \r \xrightarrow{{{\mathsf{SAFT}}}}{\saft{h} =\PH\l \omega \r \saft{f}\saft{g}},
\]
where $\saft{f}, \saft{g}$ and $\saft{h}$ denote the SAFT of $f,g$ and $h$, respectively and $\PH \left( \omega  \right) = {e^{\jmath \frac{\omega }{b}\left( {dp - bq} \right)}}{e^{-\jmath \frac{{d{\omega ^2}}}{{2b}}}}$.
\end{theorem}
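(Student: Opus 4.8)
The plan is to reduce the statement to the ordinary Fourier convolution theorem by unravelling the three modulation layers hidden inside both the SAFT and the operator $\SC$. First I would start from the definition \eqref{sconv}, writing $h = K_b\, m^*_{\bsls}\,(\upo{f}*\upo{g})$, and compute its SAFT directly from \eqref{saft}. Using the decomposition sketched in the excerpt, $\saft{h}$ splits into a quadratic-phase prefactor (coming from $\kappa_{\bsls}$, i.e.\ from $\mathbf{M}_1$ and the offset $\nvec{\lambda}$), an inner Fourier transform at argument $(\omega-p)/b$, and the chirp $m_{\bsls}$ that multiplies $h$. The key cancellation is that the post-chirp $m_{\bsls}(t)=e^{\jmath a t^2/(2b)}$ appearing in the SAFT kernel exactly annihilates the factor $m^*_{\bsls}(t)$ in front of $h$; what remains under the integral is simply $\int (\upo{f}*\upo{g})(t)\, e^{-\jmath \xi t}\,dt$ with $\xi = (\omega-p)/b$, i.e.\ an ordinary Fourier transform of an ordinary convolution.

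Next I would apply the classical convolution theorem to that inner integral, obtaining $\widehat{\upo{f}}(\xi)\,\widehat{\upo{g}}(\xi)$ where the hats denote plain Fourier transforms and $\xi=(\omega-p)/b$. The final step is bookkeeping: I must re-express each of $\widehat{\upo{f}}\big((\omega-p)/b\big)$ and $\widehat{\upo{g}}\big((\omega-p)/b\big)$ in terms of $\saft{f}$ and $\saft{g}$. Running the same three-step factorization backwards (as in the $f\mapsto f_1\mapsto\widehat{f_1}\mapsto\widehat{f}_1$ chain displayed before Theorem~\ref{thm:saftconv}) shows that $\saft{f}(\omega)$ equals $\widehat{\upo{f}}\big((\omega-p)/b\big)$ up to the scalar $1/\sqrt b$ and the $\omega$-dependent quadratic phase $e^{\jmath \omega q + \jmath d(\omega-p)^2/(2b)}$. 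Since $h$ carries one such factorization but its inner Fourier transform is a \emph{product} of two transforms, I pick up the quadratic-phase/scaling prefactor \emph{once} from the kernel of $\saft{h}$ but must \emph{divide out} one extra copy of it to match $\saft{f}\saft{g}$; collecting these leftover factors is precisely what produces $\PH(\omega)=e^{\jmath \frac{\omega}{b}(dp-bq)}e^{-\jmath d\omega^2/(2b)}$. I would verify the constant $K_b$ tracks correctly: the two $K_b$'s inside $\widehat{\upo{f}}$-type expressions and the explicit $K_b$ in \eqref{sconv} combine with the single $K_b$ in the kernel of $\saft{h}$ to leave the stated identity with no stray constant.

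The main obstacle I anticipate is not conceptual but the careful matching of the quadratic phases in $t$ and in $\omega$: the exponent in $\kappa_{\bsls}$ contains the cross term $2t(p-\omega)$ and the pure-$\omega$ term $-2\omega(dp-bq)$ in addition to $at^2$ and $d\omega^2$, and one must be scrupulous that the $at^2$ piece (and only that piece) is consumed by $m^*_{\bsls}$, that the $t(p-\omega)$ piece becomes the Fourier kernel at the shifted/scaled frequency, and that every residual $\omega$-only phase is attributed either to the two factors $\saft{f},\saft{g}$ or to $\PH$. A clean way to keep this honest is to prove it for $b\neq 0$ using the $\mathbf{M}_1\bsl_{\sf FT}\mathbf{M}_2$ factorization verbatim — then $\upo{\cdot}$ is exactly the $\mathbf{M}_2$ step, the inner integral is exactly the $\bsl_{\sf FT}$ step, and $\PH$ is forced by demanding that the $\mathbf{M}_1$ step be applied once rather than twice. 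The $b=0$ degenerate case is handled separately and trivially, since there the SAFT is a pure chirp multiplication and $\SC$ collapses accordingly; I would remark that this edge case is consistent but not dwell on it. Finally, I would note the result also appears in \cite{Bhandari:2018}, so a terse derivation suffices here.
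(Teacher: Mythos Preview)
Your proposal is correct and is, in fact, the natural derivation: cancel the $e^{\jmath a t^2/(2b)}$ chirp in $\kappa^*_{\bsls}$ against the explicit $m^*_{\bsls}$ in \eqref{sconv}, apply the ordinary Fourier convolution theorem to the remaining integral at frequency $\xi=(\omega-p)/b$, and then convert $\widehat{\upo{f}}(\xi)$ and $\widehat{\upo{g}}(\xi)$ back to $\saft{f}$ and $\saft{g}$ using the identity $\saft{f}=K_b\,e^{\jmath(d\omega^2-2\omega(dp-bq))/(2b)}\,\widehat{\upo{f}}\bigl((\omega-p)/b\bigr)$. The surplus copy of the $\omega$-phase prefactor is exactly $\PH^{-1}$, and the two $K_b$'s from the back-conversion cancel the $K_b^2$ coming from the kernel and from \eqref{sconv}, so the constants close as you anticipated.

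Note, however, that the paper does \emph{not} give its own proof here: immediately after the statement it writes ``The proof of this theorem is presented in \cite{Bhandari:2017}'' and points to \cite{Bhandari:2018} for further results. So there is no in-paper argument to compare against. Your route via the ${\mathbf{M}}_1\bsl_{\sf FT}{\mathbf{M}}_2$ factorization is precisely the mechanism the paper itself advertises just before Theorem~\ref{thm:saftconv}, and it is the standard proof one finds in the cited references; you may safely present it as a short computation along the lines above. The $b=0$ remark is fine to include but is outside the scope of the theorem as stated (the convolution $\SC$ in \eqref{sconv} presupposes $m_{\bsls}$, hence $b\neq 0$).
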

{The proof of this theorem is presented in \cite{Bhandari:2017}. For further results, we the reader to \cite{Bhandari:2018}.} The duality principle also holds for the SAFT. {Namely}, multiplication of functions in the time domain results in convolution in the SAFT domains,
\begin{equation*}
{\PHI}\l t \r f\l t\r g \l t \r \xrightarrow{{{\mathsf{SAFT}}}} \CB \l \widehat{f} \SCI \widehat{g} \r \l \omega \r
\end{equation*}
where $\CB$ is defined in (\ref{iSAFT}). This result is based on the proof of Theorem~\ref{thm:saftconv}. For {further details}, we refer the reader to \cite{Bhandari:2018}.

\section{SAFT Domain and Bandlimited Subspaces}

\label{sec:ShannonSAFT}

In order to set the ground for sampling of sparse signals, {we begin by recalling the sampling theorem for SAFT bandlimited signals \cite{Bhandari:2016,Bhandari:2017}.} The notion of bandlimitedness has a de facto association with the Fourier domain. {Below}, we consider a more general definition. 
\begin{definition}[Bandlimited Functions] Let $f$ be a square-integrable function. We say {that} $f$ is $\Omega_m$--bandlimited {and write}, 
\[f \in \mathcal{B}_{\bsls}^{\Omega_m}  \Leftrightarrow f\left( t \right) = \CB\int\nolimits_{ - \Omega_m}^{ + \Omega_m } {\saft{f}\kappa^*_{\bslsi} \left( \omega  \right)d\omega }.\]
With $\bsls = \bsl_{\sf FT }$, we obtain the standard case when $f$ is $\Omega_m$--bandlimited in the Fourier domain. 
\end{definition}
Shannon's sampling theorem is restricted to Fourier transforms. In that case, $\bsls = \bsl_{\sf FT}$ and any $f \in  \mathcal{B}_{\bsl_{\sf FT}}^{\Omega_m}$ can be uniquely recovered from samples $f\l k\Delta \r , k \in\mathbb{Z}$ provided that $\Delta \leq \pi/\Omega_m$. For bandlimited signals in the SAFT sense, the statement of Shannon's sampling theorem is follows. 

\begin{theorem}[Shannon's Sampling Theorem for the SAFT Domain \cite{Bhandari:2017}] Let $f$ be an $\Omega_m$--bandlimited function in the SAFT domain, that is, $f \in  \mathcal{B}_{\bsls}^{\Omega_m} $. Then, we have, 
\[
f\l t \r = {e^{ - \jmath \frac{{a{t^2}}}{{2b}}}}\sum\limits_{n \in \mathbb{Z}} {\up{f}{n\Delta}{e^{ - \jmath p\frac{{t - n\Delta}}{b}}}\sincD{t-n\Delta}}.
 \]
\end{theorem}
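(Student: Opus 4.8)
The plan is to reduce the SAFT sampling theorem to the classical Shannon theorem by exploiting the chirp-modulation structure that relates the SAFT to the ordinary Fourier transform. First I would show that if $f \in \mathcal{B}_{\bsls}^{\Omega_m}$, then the chirp-modulated function $g(t) \DE \up{f}{t} = e^{\jmath \frac{a t^2}{2b}} f(t)$ is (up to a fixed modulation by the offset $p$) bandlimited in the ordinary Fourier sense. Concretely, starting from the definition $f(t) = \CB \int_{-\Omega_m}^{+\Omega_m} \saft{f}\, \kappa^*_{\bslsi}(\omega)\, d\omega$ and substituting the explicit kernel from \eqref{saftkernel}, the quadratic term $e^{-\jmath a t^2/(2b)}$ factors out of the integral, the term $e^{\jmath d\omega^2/(2b)}$ is absorbed into a reparametrization of the spectrum, and what remains is an inverse Fourier integral over a finite band in a shifted variable. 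This identifies $e^{\jmath p t/b}\, g(t)$ (or a similar explicitly modulated version) as an element of the classical Paley--Wiener space $\mathcal{B}_{\bsl_{\sf FT}}^{\Omega_m/|b|}$ — the support width in the effective Fourier variable gets scaled by $1/b$, which is exactly what forces the sampling period $\Delta$ and the appearance of $\sincD{\cdot}$ with its $\Delta$-dependence.

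Next I would apply the ordinary Shannon sampling theorem to this Fourier-bandlimited function: it can be reconstructed from its samples at spacing $\Delta$ via the cardinal series $\sum_n (\cdot)(n\Delta)\, \sincD{t - n\Delta}$, where $\sincD{x} = \operatorname{sinc}_\Delta(x)$ is the appropriately normalized sinc whose zero-crossings sit at integer multiples of $\Delta$ and whose scaling matches the band $\Omega_m/|b|$. Then I would undo the change of variables: multiply through by the inverse chirp $e^{-\jmath a t^2/(2b)}$ and the inverse modulation, carefully tracking how the sample values transform. The key bookkeeping point is that the samples of the auxiliary function at $t = n\Delta$ are $e^{\jmath a (n\Delta)^2/(2b)} f(n\Delta) = \up{f}{n\Delta}$ times the modulation factor evaluated at $n\Delta$, and that the quadratic chirp prefactor, when pulled back out, combines with the per-sample phases to produce exactly the stated kernel $e^{-\jmath a t^2/(2b)} \up{f}{n\Delta}\, e^{-\jmath p (t - n\Delta)/b}\, \sincD{t - n\Delta}$. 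Collecting these factors is the place where one must be most careful about which phases depend on $t$, which on $n\Delta$, and which on both.

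An alternative, arguably cleaner route uses the convolution--multiplication machinery of Theorem~\ref{thm:saftconv} directly: sampling is multiplication by a Dirac comb in time, which corresponds to SAFT-convolution with a (chirped) comb in the SAFT domain, and bandlimitedness of $\saft{f}$ to $[-\Omega_m, \Omega_m]$ ensures the replicas do not overlap when $\Delta$ is small enough, so $\saft{f}$ is recovered by an SAFT-domain low-pass (ideal SAFT-sinc) filter; applying the inverse SAFT then yields the interpolation formula. This is morally the same proof organized around operators rather than explicit integrals, and it makes the $\Delta \le \pi/\Omega_m$-type condition (implicit in the statement via $\sincD{\cdot}$) transparent as a non-aliasing condition.

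The main obstacle I anticipate is purely one of careful phase accounting rather than any conceptual difficulty: the SAFT kernel carries three coupled quadratic/linear phase contributions (the $a t^2/(2b)$ chirp, the $d\omega^2/(2b)$ spectral chirp, and the offset-induced linear terms $p$, $q$), and one must verify that after modulating $f$ to a Fourier-bandlimited signal, sampling, reconstructing, and demodulating, every spurious phase cancels except precisely the $e^{-\jmath a t^2/(2b)}$ prefactor and the $e^{-\jmath p (t - n\Delta)/b}$ per-term factor in the claimed formula. A secondary point to be careful about is that the normalization constants $K_b$, $\CB$, and $\CB$ (from \eqref{saft}, \eqref{iSAFT}) are consistent so that the reconstruction has no leftover scalar multiple; one should double-check that $d = 1/b$ modulo the constraint $ad - bc = 1$ enters correctly when the effective Fourier band is computed. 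I would verify the whole chain of phases first in the special case $\bsls = \bsl_\theta$ (the FrFT, where the known sampling theorem provides a sanity check) and then carry the general $\bsls$ computation, since the structure is identical.
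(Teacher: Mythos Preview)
Your approach is correct but takes a genuinely different route from the paper. The paper proceeds via an orthogonal-projection argument: it introduces the family $\{\phi_n\}$ in \eqref{gsinc}, verifies that these functions form an orthonormal basis for the SAFT-bandlimited subspace (by computing $\saft{\phi_0}$ and checking its support), and then invokes the projection theorem so that $f = \mathscr{P}_{\vbl} f = \sum_n \langle f,\phi_n\rangle \phi_n$; the identification of $\langle f,\phi_n\rangle$ with the samples $f(n\Delta)$ (up to phase) is then made through the SAFT convolution--multiplication property of Theorem~\ref{thm:saftconv}, as in \eqref{LPP2}. Your primary route instead \emph{reduces} the statement to the classical Shannon theorem: you chirp-modulate $f$ to $G(t) = e^{\jmath a t^2/(2b)} e^{\jmath p t/b} f(t)$, read off directly from the inverse-SAFT integral that $G$ is Fourier-bandlimited to $[-\Omega_m/|b|,\Omega_m/|b|]$, apply the ordinary cardinal series to $G$, and demodulate. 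This is more elementary --- it needs only classical Shannon plus bookkeeping of phases, and avoids building the basis $\{\phi_n\}$ or the SAFT convolution machinery --- whereas the paper's route, while heavier, yields as a by-product the filtering interpretation \eqref{LPP2} and the explicit subspace structure that the rest of Section~\ref{sec:main} relies on. Your ``alternative route'' via Dirac combs and SAFT-domain aliasing is closer in spirit to the paper's operator viewpoint but is still not the argument actually given there. One small caution: your parenthetical ``$d = 1/b$ modulo the constraint $ad-bc=1$'' is not right --- $d$ is a free parameter --- but this does not affect your argument, since the $d\omega^2/(2b)$ term is absorbed into the reparametrized spectrum and never needs to be tied to $b$.
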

{A detailed proof of this theorem that is based on reproducing kernel Hilbert spaces is given in \cite{Bhandari:2017}. Here, we will briefly revisit the key steps. The associated computations will be useful in the context of sampling sparse signals.}

{It is well known that} the sampling theorem for the Fourier domain can be interpreted as an orthogonal projection of $f$ onto the subspace of bandlimited functions \cite{Eldar:2015},
\begin{equation}
\label{VBL}
\vbl = {\text{span}}{\left\{ \frac{1}{\sqrt{\Delta}} \sincD{t-n\Delta} \right\}_{n \in \mathbb{Z}}}.
\end{equation}
Thanks to the projection theorem,
\begin{equation}
\label{proj}
{\mathscr{P}_{\vbl}}f = \arg \mathop {\min }\limits_{g \in {\sf V_{{\sf{BL}}}}} \left\| {f - g} \right\|_{{L_2}}^2 \ , \ f \in  \mathcal{B}_{\bsl_{\sf FT}}^{\Omega_m}  \Leftrightarrow  f = {\mathscr{P}_{\vbl}}f.
\end{equation}
In the spirit of the Fourier domain result, in \cite{Bhandari:2017}, we derived the subspace of bandlimited functions linked with the SAFT domain which take the form of, 
\begin{equation}
\label{gsinc}
{\phi _n}\left( t \right) =
\frac{{{e^{ - \jmath \frac{p}{b}\left( {t - n\Delta } \right)}}}}{{\sqrt \Delta  }}
m_{\bsls}^*\left( t \right){m_{\bsls}}\left( {n\Delta } \right)\sincD{t-n\Delta}.
\end{equation}
The family ${\left\{ {{\phi _n}\left( t \right)} \right\}_{n \in \mathbb{Z}}}$ is an orthonormal basis for the subspace of bandlimited functions in the SAFT domain. Indeed, $\phi_0 \in \mathcal{B}_{\bsls}^{\Omega_m}$ with $\Omega_m = \pi b / \Delta$ since, 
\begin{equation*}
\label{ }
\saft{\phi_0} = {\sqrt\Delta} {\K} \PH^*\l \omega \r \underbrace{\ind_{[-\Omega_m , \Omega_m]}\l \omega \r}_{\sf{Bandlimited}} \ , \ \Omega_m = \frac{\pi b}{\Delta}.
\end{equation*}
Thanks to the orthonormality and the bandlimitedness properties, the implication of the projection theorem (cf.~\eqref{proj}) is that $f \in  \mathcal{B}_{\bsls}^{\Omega_m}  \Leftrightarrow  f = {\mathscr{P}_{\vbl}}f$ and {by developing this further, we obtain,} 
\begin{align}
\label{shannonSAFT}
 {\mathscr{P}_{\vbl}}f &= \sum\limits_{n \in \mathbb{Z}} {\left\langle {f,{\phi _n}} \right\rangle {\phi _n}\left( t \right)}\\
& = {e^{ - \jmath \frac{{a{t^2}}}{{2b}}}}\sum\limits_{n \in \mathbb{Z}} {\up{f}{n\Delta}{e^{ - \jmath p\frac{{t - n\Delta}}{b}}}\sincD{t-n\Delta}}. \notag
\end{align}
As in the classical case, {the} coefficients $\{ {\left\langle {f,{\phi _n}} \right\rangle }\}_n$ are equivalent to low-pass filtering in {the} SAFT domain followed by uniform sampling. To make this link clear, consider the kernel,
\begin{equation}
\label{BLK}
{\varphi _{{\sf{BL}}}}\left( t \right) = \frac{1}{{\sqrt \Delta  {K_b}}}{m^*_{\bsls}}\left( t \right){e^{ - \jmath \frac{p}{b}t}}\sincD{-t},
\end{equation}
which is the amplitude scaled version of $\phi_0$. Using the convolution-product duality in Theorem~\ref{thm:saftconv}, {it is easy to verify \cite{Bhandari:2017}} that $\forall f \in  \mathcal{B}_{\bsls}^{\Omega_m}$, 
\begin{equation}
\label{LPP2}
\left\langle {f,{\phi _n}} \right\rangle  = \sqrt{\Delta} {\left. {\left( {f \SC {\varphi _{{\sf{BL}}}}} \right)\left( t \right)} \right|_{t = n\Delta }}, \quad \varphi _{\sf{BL}} = \K \phi_0
\end{equation}
and the expansion coefficients in \eqref{shannonSAFT} are indeed the samples.

\section{Sparse Sampling and Super-resolution}
\label{sec:main}

{When $\varphi$ is bandlimited in the Fourier domain, the super-resolution problem boils down to estimating $\{c_k\}$'s and $\{t_k\}$'s from measurements $y\l n \Delta \r, \Delta>0$ of,}
\begin{equation}
\label{sparsemodel}
y\left( t \right) = \sum\limits_{k = 0}^{K - 1} {{c_k}\varphi \left( {t - {t_k}} \right)}.
\end{equation}
This problem can be restated as {that} of sampling spikes or sparse functions given a bandlimited sampling kernel, $\varphi$. This is because of the equivalence, 
\begin{equation}
\label{LPP1}
y\left( {n\Delta } \right) = \underbrace {\int {s\left( t \right)\overline \varphi  \left( {t - n\Delta } \right)dt} }_{{\text{Projection}}} \equiv \underbrace {{{\left. {\left( {s*\varphi } \right)\left( t \right)} \right|}_{t = n\Delta }}}_{{\text{Convolution}}},
\end{equation}
where $\overline \varphi\left( t \right) = \varphi \left( -t \right)$, $\Delta>0$ is the sampling rate and $s$ is the sparse signal, 
\begin{equation}
\label{FRI}
s\left( t \right) = \sum\limits_{k = 0}^{K - 1} {{c_k}\delta \left( {t - {t_k}} \right)}.
\end{equation} 

{In the previous section, we discussed sampling theory of bandlimited signals in the SAFT domain. By considering sparse signals instead of bandlimited functions, the measurements in context of the SAFT domain amount to,}
\begin{equation}
\label{sampling}
s\xrightarrow{}\boxed\varphi \xrightarrow{} \underbrace{ s\SC\varphi  = y }_{\sf{Pre\mbox{-}Filtering}}\xrightarrow{{}} 
\underbrace{{ \otimes _{{\delta _{n\Delta }}}}\xrightarrow{}y\left( {n\Delta } \right)}_{\sf{Sampling}},
\end{equation}
where ${\otimes _{{\delta _{n\Delta }}}}$ denotes the sampling operation or modulation with a $\Delta$-periodic impulse train. {As shown in (\ref{shannonSAFT}), whenever $s\in \mathcal{B}_{\bsls}^{\Omega_m}$, samples $y\l n\Delta \r$ uniquely characterize $s$ provided that $\Delta\leq\pi b /\Omega_m$.}

Next, we turn our attention to the problem of {recovering} a sparse {signal} from low-pass projections (\ref{LPP2}). In particular, we will discus three variations on this theme where low-pass projections are attributed to:
\begin{enumerate}[leftmargin=20pt,label={{$\arabic*)$}},itemsep = 2pt]
  \item Arbitrary, bandlimited sampling kernels.
  \item Smooth, time-limited sampling kernels.
  \item Gabor functions associated with the SAFT domain. 
\end{enumerate}
{The} first two results rely on the architecture of (\ref{sampling}). The last result generalizes the recent work of Aubel \etal~\cite{Aubel:2017} and can be extended to the case of phase-retrieval \cite{Eldar:2015a,Zalevsky:1996,Dong:1997} and wavelets.

Since sparse signals are time-limited, their periodic extension allows for a Fourier series representation. That said, the basis functions of the SAFT kernel are aperiodic. As a result, before discussing the recovery of sparse signals, we introduce mathematical tools that allow for Fourier series-like representation of time-limited signals. 
\subsection{Special Affine Fourier Series (SAFS)}
\label{subsec:SAFS}
It is well known that the family of functions ${\left\{ {{e^{\jmath k{\omega _0}t}}} \right\}_{k \in \mathbb{Z}}}$, with fundamental harmonic $\omega_0 = 2\pi/T$, constitutes an orthonormal basis of $L_2\l [-\pi/T,\pi/T] \r$. These basis functions are used for representing $T$-periodic functions. Let $\vfs = \text{span}{\left\{ {{e^{\jmath k{\omega _0}t}}} \right\}_{k \in \mathbb{Z}}}$. Due to the orthonormality and completeness properties, it follows that, {for every} $f\l t \r = f\l t+T\r$,
\begin{equation}
\label{FS}
{\mathscr{P}_{\vfs}}f = \sum\limits_{n \in \mathbb{Z}} c_n e^{\jmath k \omega_0 t}, \quad c_n = \left\langle {f,e^{\jmath k \omega_0 t}} \right\rangle.
\end{equation}
Inspired by the Fourier series representation, here, we develop a parallel for the SAFT domain which is useful for the task of representing time-limited signals including sparse signals. 

In order to determine the basis functions associated with the Special Affine Fourier Series or the SAFS, we first identify the candidate functions and then enforce the orthonormality property. Note that a spike at frequency $\omega = n \omega_0$ in the SAFT domain results in the time domain function
\begin{equation}
\label{SAFTf}
\left\langle {\delta \left( {\omega  - n{\omega _0}} \right),\kappa_{\bslsi} \left( {\omega ,t} \right)} \right\rangle \DEq{iSAFT} \kappa_{\bslsi}^* \left( {n{\omega _0},t} \right).
\end{equation}
With the above as our prototype basis function, we would like to represent a time-limited signal $s\left( t \right), \ t\in [0,T )$ as
\begin{equation}
\label{safs}
s\left( t \right) = \sum\limits_{n\in\mathbb{Z}} {\widehat s_{\bsls}\left[ n \right]\kappa_{\bslsi}^* \left( {n{\omega _0},t} \right)},
\end{equation}
which mimics (\ref{FS}), and where the SAFS coefficients are
\begin{equation}
\label{safscoeff}
\widehat s_{\bsls}\left[ n \right] = {\left\langle {s,\kappa_{\bsls} \left( {\cdot,n{\omega _0}} \right)} \right\rangle _{\left[ {0,T} \right]}}.
\end{equation}

To enable a representation in the form (\ref{safscoeff}), we enforce orthogonality on the candidate basis functions,
\[{\left\langle {\kappa_{\bsls} \left( {t,n{\omega _0}} \right),\kappa_{\bslsi} \left( {k{\omega _0},t} \right)} \right\rangle _{\left[ {0,T} \right]}} \equiv \underbrace{I_{\omega_0} = w_0 \delta \left[ {n - k} \right]}_{\sf{Orthogonality}},\]
{for an appropriate $w_0$}. {Computing the inner-product explicitly yields,
\begin{align*}
\label{}
{I_{{\omega _0}}} & = {\left| \K \right|^2}\mu \int\nolimits_0^T {{e^{\jmath \frac{{{\omega _0}\left( {k - n} \right)}}{b}t}}dt} \\ 
& = {\left| \K \right|^2} \begin{cases}
   T   & k=n, \\
  - \mu  \tfrac{{\jmath b}}{{\left( {k - n} \right){\omega _0}}}\left( {1 - {e^{ - \jmath \frac{{{\omega _0}T}}{b}\left( {k - n} \right)}}} \right)   & k\neq n, 
\end{cases}
\end{align*}
{where for brevity we denoted}, 
$$\mu = \l{e^{ - \jmath \frac{{{\omega _0}\left( {k - n} \right)}}{{2b}}{ \l2 b q+d \omega_0 \l k+n\r-2 d p\r   }}}\r.$$}
Therefore, the orthogonality property will hold, if, {$${1 - {e^{ - \jmath \frac{{{\omega _0}T}}{b}\left( {k - n} \right)}}}=0 \Leftrightarrow {\omega _0} = {2\pi b}/{T}.$$}
We consolidate our result in the following definition. 
\begin{definition}[Special Affine Fourier Series] Let $s$ be a time-limited signal supported on the interval $\left[0, T\right)$. The Special Affine Fourier Series representation of $s$ is defined in (\ref{safs}) and (\ref{safscoeff}) where $\omega_0 = 2\pi b/T$. 
\end{definition}
In case of $\bsls = \bsl_{\sf{FT}}$, the SAFS {reduces to} the Fourier series. {To see this, let us substitute the parameters of $\bsl_{\sf{FT}}$, from Table~\ref{tab:1}, in $\bsls$. Then, we have, $\kappa_{\bslsi}^* \left( {n{\omega _0},t} \right)  = e^{\jmath n \omega_0 t}$ which are indeed the basis functions for the Fourier Series (upto a constant, $\K$).} Similarly, with $\bsls = \bsl_{\theta}$, the SAFS reduces to Fractional Fourier series (cf.~(10) in \cite{Pei:1999}).

Based on the definition of the SAFS for time-limited functions, we
now develop an alternative representation of sparse signals defined in (\ref{FRI}) that are supported on the interval 
$$T = \left| {\max \{t_k\}_{k=0}^{K-1} - \min \{t_k\}_{k=0}^{K-1}} \right|.$$
{With $T$ above and $\epsilon>0$, we compute the SAFS coefficients, 
\begin{align}
  \widehat s_{\bsls}\left[ n \right] & \EQc{safscoeff} {\left\langle {s,\kappa_{\bsls} \left( {t,n{\omega _0}} \right)} \right\rangle} \notag \\
 & = \int\limits_\epsilon^{\epsilon+T} s\l t \r {\kappa_{\bsls} ^*}\left( {t,n{\omega _0}} \right) dt  \notag \hfill \\
   &= \sum\limits_{k = 0}^{K - 1} {{c_k}{\kappa_{\bsls} ^*}\left( {{t_k},n{\omega _0}} \right)}, \quad \omega_0 = 2\pi b/T.  \hfill 
\label{SAFTS}
\end{align}}
{From} (\ref{safs}) we obtain the SAFT series,
\begin{equation}
\label{safsdirac}
s\left( t \right) \EQc{safs} \sum\limits_{n\in \mathbb{Z}}\sum\limits_{k=0}^{K-1} { {{c _k}{\kappa ^*_{\bsls}}\left( {{t_k},n{\omega _0}} \right)} {\kappa ^*_{\bslsi}}\left( {n{\omega _0},t} \right)}.
\end{equation}
{Using} (\ref{saftkernel}) we {have}, 
\begin{equation}
\label{aux1}
{\kappa ^*_{\bsls}}\left( {{t_k},n{\omega _0}} \right){\kappa ^*_{\bslsi}}\left( {n{\omega _0},t} \right) \EQc{saftkernel} {e^{ - \jmath \l \mathsf{Q}\left( t \right) - \mathsf{Q} \left( {{t_k}} \right)\r}}{e^{\jmath \frac{{{\omega _0}n}}{b}\left( {t - {t_k}} \right)}}
\end{equation}
where $\mathsf{Q}(t)$ is a quadratic polynomial,
\begin{equation}
\label{Qt}
\mathsf{Q}\left( t \right) \DE \left( {a{t^2} + 2pt} \right)/2b.
\end{equation}
{Substituting into (\ref{safsdirac}) leads to,}
\begin{align}
  s\left( t \right) & = {e^{ - \jmath \mathsf{Q}\left( t \right)}}\sum\limits_{n \in \mathbb{Z}} {\sum\limits_{k = 0}^{K - 1} {\underbrace {{c _k}{e^{\jmath \mathsf{Q}\left( {{t_k}} \right)}}}_{{c' _k}}\underbrace {{e^{ - \jmath \frac{{{\omega _0}n{t_k}}}{b}}}}_{u_k^n}} {e^{\jmath \frac{{{\omega _0}nt}}{b}}}} \notag \hfill \notag \\
  & = {e^{ - \jmath \mathsf{Q}\left( t \right)}}\sum\limits_{n \in \mathbb{Z}} {\widehat h \left[ n \right]{e^{\jmath \frac{{{\omega _0}nt}}{b}}}} \notag \\
  &= {e^{ - \jmath \mathsf{Q}\left( t \right)}} h\l t \r, 
\label{hm}
\end{align}
where $\widehat{h}[n]$ is a sum of complex exponentials: 
\begin{equation}
\label{SOCE}
\widehat h\left[ n \right] = \sum\limits_{k = 0}^{K - 1} {{c _k}{e^{\jmath \mathsf{Q}\left( {{t_k}} \right)}}{e^{-\jmath \frac{{{\omega _0}n}}{b}{t_k}}}}  = \sum\limits_{k = 0}^{K - 1} {{c' _k}u_k^n}. 
\end{equation}

By arranging (\ref{hm}), we may rewrite,
\begin{equation}
\label{HT}
\underbrace {s\left( t \right){e^{\jmath \mathsf{Q}\left( t \right)}}}_{h\left( t \right)} = \underbrace {\sum\limits_{n \in \mathbb{Z}} {\widehat h\left[ n \right]{e^{\jmath \frac{{2\pi }}{T}nt}}}  \equiv h\left( t \right)}_{{\sf{Fourier\ Series}}},
\end{equation}
{with $\widehat{h}\left[ n \right]$ given in (\ref{SOCE}). We have thus shown that a }modulated version of the sparse signal $s\l t \r$ is equivalent to another sparse signal,
\begin{equation}
\label{ht}
h\l t \r = \sum\limits_{n\in\mathbb{Z}}\sum\limits_{k = 0}^{K - 1} {{c'_k}\delta \left( {t - {t'_k} -nT} \right)} 
\end{equation}%
where unknowns $\{ c'_k,t'_k \}_{k=0}^{K-1}$ of $h\l t \r$ are related to the unknowns of the sparse signal we seek to recover
\[
c'_k = {{c _k}{e^{\jmath \mathsf{Q}\left( {{t_k}} \right)}}}
\quad \mbox{and} \quad t'_k = t_k.
\]
{Due {to} this link between $s\l t \r$ in \eqref{FRI} and $h\l t \r$ in \eqref{ht}, their Fourier series coefficients are also related to one another as in \eqref{HT}. We formally state this result as a theorem. 
\begin{theorem}[SAFS of Sparse Signals] 
\label{thm:safs}
Let $s\l t \r$ be the sparse signal defined in (\ref{FRI}). Then, the Special Affine Fourier Series representation of $s\l t \r$ is given by,
\[
s\left( t \right) ={e^{ - \jmath \mathsf{Q}\left( t \right)}}\sum\limits_{n \in \mathbb{Z}} {\widehat h \left[ n \right]{e^{\jmath \frac{{{\omega _0}nt}}{b}}}},
\]
where $\widehat h\left[ n \right] = \sum\limits_{k = 0}^{K - 1} {{c _k}{e^{\jmath \mathsf{Q}\left( {{t_k}} \right)}}{e^{-\jmath \frac{{{\omega _0}n}}{b}{t_k}}}} $ and $\mathsf{Q}\left( t \right)$ is defined in \eqref{Qt}.
\end{theorem}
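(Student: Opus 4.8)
The plan is to prove Theorem~\ref{thm:safs} essentially by specializing the computation of the Special Affine Fourier Series to a sparse (sum-of-Diracs) input, which amounts to chaining together the formulas \eqref{safs}, \eqref{safscoeff}, and \eqref{saftkernel}. First I would invoke the definition of the SAFS: since $s$ is time-limited (supported on an interval of length $T=|\max_k t_k - \min_k t_k|$, or more carefully on $[\epsilon,\epsilon+T)$ for a suitable offset $\epsilon>0$), the representation \eqref{safs}--\eqref{safscoeff} applies with fundamental harmonic $\omega_0 = 2\pi b/T$. Plugging $s(t)=\sum_{k=0}^{K-1} c_k \delta(t-t_k)$ into the coefficient formula \eqref{safscoeff} and using the sifting property of the Dirac distribution immediately gives $\widehat{s}_{\bsls}[n] = \sum_{k=0}^{K-1} c_k\, \kappa_{\bsls}^*(t_k,n\omega_0)$, which is exactly \eqref{SAFTS}.

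Next I would substitute these coefficients back into \eqref{safs} to obtain the double sum \eqref{safsdirac}, and then simplify the product $\kappa_{\bsls}^*(t_k,n\omega_0)\,\kappa_{\bslsi}^*(n\omega_0,t)$ using the explicit exponential form of the SAFT kernel in \eqref{saftkernel}. The key algebraic observation is that, upon writing out both quadratic-form exponentials, all $\omega$-dependent quadratic terms (the $d\omega^2/2b$ pieces) cancel between the kernel and its inverse-kernel partner, as do the $b$-displacement and modulation offsets, leaving the clean separable form \eqref{aux1}: a factor $e^{-\jmath(\mathsf{Q}(t)-\mathsf{Q}(t_k))}$ with $\mathsf{Q}(t)=(at^2+2pt)/2b$ as in \eqref{Qt}, times the pure complex exponential $e^{\jmath(\omega_0 n/b)(t-t_k)}$. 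This is the one place where care is required, so I would record it as the main obstacle: one must verify that the cross terms in $\nvec{r}^\top\mathbf{U}\nvec{r}+\nvec{v}^\top\nvec{r}$ — in particular the $2t(p-\omega)$ and $-2\omega(dp-bq)$ contributions — combine correctly when the $\bsls$-kernel (evaluated at $(t_k,n\omega_0)$, conjugated) is multiplied by the $\bslsi$-kernel (evaluated at $(n\omega_0,t)$, conjugated). Here I would lean on the identity $\kappa_{\bslsi}(\omega,t)=\kappa^*_{\bsls}(t,\omega)$ already stated below \eqref{iSAFT}, which makes the bookkeeping symmetric, and on the fact that the $\epsilon$-offset of the support does not affect the computation since the Dirac masses all lie inside the chosen period.

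Finally, I would pull the $t$-only factor $e^{-\jmath\mathsf{Q}(t)}$ out of the double sum and group the remaining terms: the $n$-sum becomes $\sum_{n\in\mathbb{Z}} \widehat{h}[n] e^{\jmath \omega_0 n t/b}$ where $\widehat{h}[n] = \sum_{k=0}^{K-1} c_k e^{\jmath\mathsf{Q}(t_k)} e^{-\jmath \omega_0 n t_k/b}$, exactly as in \eqref{SOCE} and \eqref{hm}. Since $\omega_0/b = 2\pi/T$, this inner sum is literally a $T$-periodic Fourier series, so \eqref{HT} identifies $h(t) = s(t)e^{\jmath\mathsf{Q}(t)}$ as the $T$-periodic signal whose Fourier coefficients are $\widehat{h}[n]$; the stated form $s(t)=e^{-\jmath\mathsf{Q}(t)}\sum_n \widehat{h}[n]e^{\jmath\omega_0 n t/b}$ follows by multiplying back through by $e^{-\jmath\mathsf{Q}(t)}$. (As a sanity check one can confirm that setting $\bsls=\bsl_{\sf FT}$ gives $\mathsf{Q}\equiv 0$ and $\omega_0/b = 2\pi/T$, recovering the ordinary Fourier series of a periodic Dirac stream.) The whole argument is thus a definition-unwinding plus the single kernel-product simplification; no further analytic machinery is needed.
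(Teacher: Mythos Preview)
Your proposal is correct and follows essentially the same route as the paper: compute the SAFS coefficients of the Dirac stream via \eqref{safscoeff} and the sifting property to get \eqref{SAFTS}, substitute back into \eqref{safs} to obtain \eqref{safsdirac}, simplify the kernel product to the separable form \eqref{aux1}, and then factor out $e^{-\jmath\mathsf{Q}(t)}$ to reach \eqref{hm}--\eqref{SOCE}. The only place you add detail beyond the paper is in flagging the kernel-product bookkeeping as the main obstacle, which is apt since the paper simply asserts \eqref{aux1} by inspection of \eqref{saftkernel}.
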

This re-parameterization of the sparse signal $s\l t\r$ in form of the Fourier series coefficients of $h\l t \r$ is key to studying sparse sampling theorems in the SAFT domain.}

\begin{figure*}[!t]
{\small
\[\left[ {\begin{array}{*{20}{c}}
  {{g_0}} \\ 
   \vdots  \\ 
  {{g_{N - 1}}} 
\end{array}} \right] = \left[ {\begin{array}{*{20}{c}}
  1& \cdots &1& \cdots &1 \\ 
  {{e^{ - \jmath \frac{{2\pi }}{T}{f_c}}}}& \ddots &1& \ddots &{{e^{ + \jmath \frac{{2\pi }}{T}{f_c}}}} \\ 
   \vdots & \ddots &1& \ddots & \vdots  \\ 
  {{e^{ - \jmath \frac{{2\pi }}{T}{f_c}\left( {N - 1} \right)}}}& \cdots &1& \cdots &{{e^{ + \jmath \frac{{2\pi }}{T}{f_c}\left( {N - 1} \right)}}} 
\end{array}} \right]\left[ {\begin{array}{*{20}{c}}
  {\widehat h\left[ { - {f_c}} \right]} \\ 
  {} \\ 
   \vdots  \\ 
  {} \\ 
  {\widehat h\left[ {{f_c}} \right]} 
\end{array}} \right]
 \quad  \equiv \quad \nvec{g} = \bf{V} \widehat{\nvec{h}}
\tag{41}
\label{top1}
\]}
\noindent\rule{\textwidth}{0.25mm}
\end{figure*}

\subsection{Sparse Signals and Arbitrary Bandlimited Kernels}
{Consider the setting in which} the sampling kernel $\varphi =  {\varphi _{{\sf{BL}}}}\left( t \right)$ (\ref{BLK}). In this case, the low-pass filtered measurements are given by $y\l t \r = \l s \SC {\varphi _{{\sf{BL}}}}\r \left( t \right)$. {The measurements can be expressed in terms of low-pass, orthogonal projections (\ref{LPP2}) as given in the following proposition.}
\begin{proposition}[Bandlimited Case] Let $s$ be the sparse signal defined in (\ref{FRI}) and the bandlimited sampling kernel ${\varphi _{{\sf{BL}}}}\left( t \right)$ be defined in (\ref{BLK}). {Then}, the measurements simplify to
\begin{equation}
\label{BLS}
y\l t \r = \sqrt\Delta{e^{ - \jmath \mathsf{Q}\left( t \right)}}
\sum\limits_{\left| m \right| \leqslant M} {\widehat h\left[ m \right]{e^{\jmath \frac{{{\omega _0}mt}}{b}}}} ,\;\;\;{M} = \left\lfloor \tfrac{T}{ 2\Delta} \right\rfloor{,}
\end{equation}
where $\sf{Q}$ is defined in (\ref{Qt}) and $\widehat{h}[n]$ is defined in (\ref{SOCE}).
\label{prp:BLS}
\end{proposition}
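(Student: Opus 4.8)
The plan is to start from the equivalence established in the preamble between SAFT convolution and the bandlimited projection. By \eqref{LPP2}, the measurements $y(n\Delta) = \sqrt{\Delta}\,(s \SC \varphi_{\sf BL})(t)\big|_{t=n\Delta}$ are exactly the projection coefficients $\langle s, \phi_n\rangle$, and more generally the continuous-time prefiltered signal $y(t) = \sqrt{\Delta}\,(s \SC \varphi_{\sf BL})(t)$ is the orthogonal projection $\mathscr{P}_{\vbl}s$ applied formally to the sparse input. So the first step is to write $y(t) = \mathscr{P}_{\vbl}s$ and recall from \eqref{shannonSAFT} that this projection, applied to any input, produces an expansion with the chirp prefactor $e^{-\jmath a t^2/(2b)}$ and a $\sincD{\cdot}$-interpolated sum. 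The key structural fact I will use is that $\varphi_{\sf BL}$ is SAFT-bandlimited to $[-\Omega_m,\Omega_m]$ with $\Omega_m = \pi b/\Delta$, so $s \SC \varphi_{\sf BL}$ is SAFT-bandlimited to the same band by Theorem~\ref{thm:saftconv}.

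Next I would bring in the SAFS representation of $s$ from Theorem~\ref{thm:safs}: $s(t) = e^{-\jmath \mathsf{Q}(t)}\sum_{n\in\mathbb{Z}} \widehat h[n] e^{\jmath \omega_0 n t/b}$ with $\omega_0 = 2\pi b/T$. The plan is to push this representation through the SAFT convolution with $\varphi_{\sf BL}$. Because $\mathsf{Q}(t) = (at^2 + 2pt)/2b$ contains exactly the chirp $m_{\bsls}(t) = e^{\jmath a t^2/2b}$ and a linear-phase term, the modulation $e^{-\jmath \mathsf{Q}(t)}$ is precisely the kind of chirp-times-modulation factor that the SAFT convolution's pre- and post-modulation operators interact with cleanly. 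Concretely, I expect that writing $h(t) = s(t)e^{\jmath \mathsf{Q}(t)} = \sum_n \widehat h[n] e^{\jmath \frac{2\pi}{T}nt}$ as in \eqref{HT}, the SAFT convolution $s \SC \varphi_{\sf BL}$ reduces — after absorbing the chirp modulations $\upo{(\cdot)}$ and the $K_b m^*_{\bsls}$ factor from Definition~\ref{def:saftconv} — to an ordinary convolution of $h$ against an ordinary lowpass ($\sincD{\cdot}$-type) kernel, followed by re-application of $e^{-\jmath \mathsf{Q}(t)}$. Since $h$ is $T$-periodic with Fourier coefficients $\widehat h[n]$, ordinary lowpass filtering with cutoff $\Omega_m = \pi b/\Delta$ in the relevant scaled frequency variable simply truncates the Fourier series: the surviving harmonics are those with $|\omega_0 n / b| \le \Omega_m$, i.e.\ $|2\pi n/T| \le \pi b/\Delta$, i.e.\ $|n| \le \lfloor T/(2\Delta)\rfloor = M$. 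The $\sqrt{\Delta}$ normalization and the $e^{-\jmath\mathsf{Q}(t)}$ prefactor then give exactly \eqref{BLS}.

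The cleanest way to organize this is probably to use the convolution theorem directly: apply $\mathscr{T}_{\bsls}$ to $y = s \SC \varphi_{\sf BL}$, getting $\widehat y_{\bsls}(\omega) = \PH(\omega)\,\widehat s_{\bsls}(\omega)\,\widehat{\varphi}_{{\sf BL},\bsls}(\omega)$; note $\widehat{\varphi}_{{\sf BL},\bsls}$ is (an amplitude-scaled) $\ind_{[-\Omega_m,\Omega_m]}(\omega)$ up to the phase $\PH^*(\omega)$ coming from $\phi_0$; the $\PH$ and $\PH^*$ cancel, leaving $\widehat y_{\bsls}(\omega)$ equal to $\widehat s_{\bsls}(\omega)$ restricted to $[-\Omega_m,\Omega_m]$ up to the constant $\sqrt{\Delta}$. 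Then invert: $y(t) = \sqrt{\Delta}\,\CB \int_{-\Omega_m}^{\Omega_m} \widehat s_{\bsls}(\omega)\kappa^*_{\bslsi}(\omega)\,d\omega$. Finally substitute the SAFS-derived form of $\widehat s_{\bsls}(\omega)$ — or equivalently use that $s = \sum_n \widehat h[n] e^{-\jmath\mathsf{Q}(t)}e^{\jmath\omega_0 nt/b}$ means its SAFT is a weighted sum of shifted Dirac-like spectral atoms at $\omega = \omega_0 n$ (up to the kernel phase), as in \eqref{SAFTf} — so that the band restriction $|\omega|\le\Omega_m$ keeps exactly $|n|\le M$. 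Collecting terms reproduces \eqref{BLS}.

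The main obstacle I anticipate is bookkeeping of the several quadratic/linear phase factors so that they combine correctly: the chirp $m_{\bsls}$ buried in $\varphi_{\sf BL}$ via \eqref{BLK}, the $K_b$ and $m^*_{\bsls}$ in the SAFT-convolution definition \eqref{sconv}, the kernel phase $\PH(\omega)$ from Theorem~\ref{thm:saftconv}, the phase $\PH^*(\omega)$ sitting in $\widehat{\phi_0}$, and the quadratic $\mathsf{Q}(t)$ from \eqref{Qt} and \eqref{aux1}. I would manage this by reusing the identities already verified in deriving \eqref{shannonSAFT} and \eqref{LPP2} — namely that $\langle f,\phi_n\rangle = \sqrt{\Delta}\,(f\SC\varphi_{\sf BL})(t)|_{t=n\Delta}$ and that $\mathscr{P}_{\vbl}f$ has the explicit chirp-sinc form — applied with $f$ replaced (formally, in the distributional sense) by the sparse $s$. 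The only genuinely new computation is plugging \eqref{aux1}, i.e.\ $\kappa^*_{\bsls}(t_k,n\omega_0)\kappa^*_{\bslsi}(n\omega_0,t) = e^{-\jmath(\mathsf{Q}(t)-\mathsf{Q}(t_k))}e^{\jmath\omega_0 n(t-t_k)/b}$, into the band-truncated sum and identifying the truncation index $M = \lfloor T/(2\Delta)\rfloor$ from $\Omega_m = \pi b/\Delta$ and $\omega_0 = 2\pi b/T$; that step is routine once the phases are lined up.
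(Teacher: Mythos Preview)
Your proposal is correct, and the middle paragraph captures exactly the paper's argument: the paper unwinds the SAFT convolution \eqref{sconv} in the time domain, observes that $\upo{s}(t)=m_{\bsls}(t)s(t)=e^{-\jmath pt/b}h(t)$ (from $s(t)=e^{-\jmath\mathsf{Q}(t)}h(t)$) and $\upo{\varphi_{\sf BL}}(t)\propto e^{-\jmath pt/b}\sincD{t}$, pulls the common exponential through the ordinary convolution via the eigenfunction property, and then uses the $T$-periodicity of $h$ so that convolution with $\sincD{\cdot}$ simply truncates the Fourier series at $|m|\le\lfloor T/2\Delta\rfloor$. Your ``cleanest way'' via the SAFT-domain convolution theorem (compute $\widehat y_{\bsls}=\PH\,\widehat s_{\bsls}\,\widehat\varphi_{{\sf BL},\bsls}$, cancel $\PH\PH^*$, then invert) is a genuinely different route the paper does not take; it would also work but requires you to identify the SAFT of the distribution $s$ as a train of spectral atoms at $\omega=n\omega_0$, which is a slightly more delicate distributional step than the paper's purely time-domain calculation. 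The paper's route has the advantage that it proves the statement for an \emph{arbitrary} Fourier-bandlimited $\psi$ in place of $\sincD{\cdot}$ with no extra work (this is exactly what feeds Section~\ref{sec:GABSK}), whereas your SAFT-domain route would need a separate computation of $\widehat\varphi_{\bsls}$ for each kernel.
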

We prove a more general version of this proposition in  Section~\ref{prf:prp:BLS}. Next, we state the main result linked with sampling of sparse signals in the SAFT domain. 

\begin{theorem}[Sparse Sampling with Bandlimited Kernel] Let $s \l t \r$ be a continuous--time, sparse signal (\ref{FRI}) and let $\varphi_{\sf{LP}}$ be the low--pass filter defined in (\ref{BLK}) with $\Delta = \pi b/\Omega_m$. Suppose that we observe low--pass filtered samples $y\left( n\Delta \right) = \l s\SC\varphi_{\sf{LP}}\r \l  n\Delta \r, n = 0, \ldots ,N - 1$. Provided that $K$ and $\bsls$ and known and $N\geq T/\Delta + 1$, the low-pass filtered samples $y\left( n\Delta \right)$, $n = 0, \ldots, N-1$ are a sufficient characterization of the sparse signal $s\l t \r$ in (\ref{FRI}).
\label{thm:BLS}
\end{theorem}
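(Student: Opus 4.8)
The plan is to combine Proposition~\ref{prp:BLS}, which already reduces the low-pass filtered signal $y(t)$ to a truncated Fourier-type series in the variables $\{\widehat h[m]\}_{|m|\le M}$, with the classical machinery for recovering a sum of exponentials. First I would take the samples $y(n\Delta)$ for $n=0,\dots,N-1$ and, using \eqref{BLS}, factor out the deterministic chirp $e^{-\jmath\mathsf{Q}(n\Delta)}$ (which is known once $\bsls$ is known) to obtain the ``demodulated'' samples
\[
g[n] \DE e^{\jmath\mathsf{Q}(n\Delta)}\, y(n\Delta)/\sqrt\Delta = \sum_{|m|\le M}\widehat h[m]\, e^{\jmath\frac{\omega_0 m\Delta}{b}n},\qquad n=0,\dots,N-1.
\]
Since $\omega_0 = 2\pi b/T$, the exponent simplifies to $e^{\jmath\frac{2\pi m\Delta}{T}n}$, so $\{g[n]\}$ are exactly $N$ uniform samples of a trigonometric polynomial of degree $M$ with $2M+1$ unknown coefficients $\widehat h[m]$. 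This is the matrix identity $\nvec g = \mathbf{V}\widehat{\nvec h}$ displayed in \eqref{top1} (with $f_c=M$); the hypothesis $N\ge T/\Delta+1 \ge 2M+1$ guarantees $\mathbf V$ has full column rank (it is a Vandermonde-type matrix with distinct nodes $e^{\jmath 2\pi\Delta/T}$ raised to powers), hence $\widehat{\nvec h}$ is uniquely determined by least squares.

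Second I would recover the innovation parameters $\{c_k, t_k\}$ from $\widehat{\nvec h}$. By \eqref{SOCE} each $\widehat h[m]$ is a sum of $K$ complex exponentials, $\widehat h[m] = \sum_{k=0}^{K-1} c'_k u_k^m$ with $u_k = e^{-\jmath\omega_0 t_k/b} = e^{-\jmath 2\pi t_k/T}$ and $c'_k = c_k e^{\jmath\mathsf{Q}(t_k)}$. Since $K$ is known and $2M+1 \ge 2K$ (which I would note follows from the standing assumption that $\Delta$ is chosen fine enough relative to the signal's rate of innovation, i.e. $M\ge K$; if this needs to be an explicit hypothesis it should be flagged), the annihilating-filter / Prony method applies: the $\{u_k\}$ are the roots of the annihilating polynomial whose coefficients span the null space of the Hankel matrix built from $\{\widehat h[m]\}_{|m|\le M}$, and once the $u_k$ are known the $c'_k$ follow from a second linear (Vandermonde) system. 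From $u_k$ one reads off $t_k = -\tfrac{T}{2\pi}\arg u_k$ (taking $t_k\in[\epsilon,\epsilon+T)$ to resolve the branch), and from $c'_k$ and the now-known $t_k$ one recovers $c_k = c'_k e^{-\jmath\mathsf{Q}(t_k)}$ since $\mathsf{Q}$ is the explicit quadratic in \eqref{Qt} determined by $\bsls$. This establishes that $y(n\Delta)$, $n=0,\dots,N-1$, determines $s(t)$.

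The main obstacle I expect is bookkeeping of the degree count rather than anything deep: one must verify that $N\ge T/\Delta+1$ genuinely yields $N\ge 2M+1$ with $M=\lfloor T/(2\Delta)\rfloor$ so that $\mathbf V$ is injective, and separately that the number of available Fourier coefficients $2M+1$ is at least $2K$ so that Prony's method can identify $K$ exponentials — the latter is the condition tying the sampling rate $\Delta = \pi b/\Omega_m$ to the sparsity $K$ and should probably be stated as part of the hypothesis. A secondary, purely technical point is the well-posedness of extracting $t_k$ from $\arg u_k$: this requires the $t_k$ to lie in a window of length $T$, which holds by the very definition of $T$ as the spread of $\{t_k\}$ (choosing $\epsilon$ so that $[\epsilon,\epsilon+T)$ contains all $t_k$), so the map $t_k\mapsto u_k$ is injective on the relevant range and the $u_k$ are distinct whenever the $t_k$ are. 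All the chirp factors $e^{\pm\jmath\mathsf{Q}(\cdot)}$ are unimodular and explicitly known from $\bsls$, so they never affect invertibility — they only relabel the unknowns, exactly as recorded in Theorem~\ref{thm:safs}.
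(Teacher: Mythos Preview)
Your proposal is correct and follows essentially the same route as the paper: demodulate the samples by $e^{\jmath\mathsf{Q}(n\Delta)}/\sqrt{\Delta}$ to obtain \eqref{gn}, invert the Vandermonde system $\nvec g=\mathbf V\widehat{\nvec h}$ under $N\ge 2f_c+1$, and then apply the annihilating-filter/Prony step to $\widehat h[m]$ to recover $\{t_k\}$ and subsequently $\{c_k\}$. Your flagged bookkeeping issue is exactly the paper's conditions (C1) and (C2), which it combines into \eqref{FRIcon}; your observation that $f_c\ge K$ is really a separate constraint on $\Delta$ (not an automatic consequence of $N\ge T/\Delta+1$) is well taken.
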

\begin{proof} To show that this statement holds, we start with the observation that modulating the low-pass samples results in the Fourier series of the sparse signal in (\ref{ht}). More precisely, 
\begin{equation}
\label{gn}
{g_n} = {\frac{y\left( {n\Delta } \right)}{\sqrt{\Delta}}{e^{\jmath {\sf{Q}}\left( {n\Delta } \right)}}} \DEq{BLS}  \sum\limits_{\left| m \right| \leqslant {f_c = M}} {\widehat h\left[ m \right]{e^{\jmath \frac{{{\omega _0}m}}{b}n\Delta }}}.
\end{equation}
Also, from (\ref{SOCE}), the Fourier coefficients $\widehat h\left[ m \right]$ are a linear combination of complex exponentials. In vector-matrix notation, we have, $\nvec{g} = \bf{V}\widehat{\nvec{h}}$ (cf.~(\ref{top1})). From (\ref{gn}), we estimate $\widehat h\left[ m \right]$ using the inverse Fourier transform, that is, $\widehat{\nvec{h}} = \bf{V}^+ \nvec{g}$ where $\l \cdot\r^+$ denotes the matrix pseudo-inverse. A unique solution to this system of equations exists provided that

\begin{chequation}
\begin{equation}
{N\geq 2f_c + 1}, \quad {f_c} = \left\lfloor {T/2\Delta } \right\rfloor.
\end{equation}
\end{chequation}

Having computed $\widehat{\mathbf{h}}$, we are now left with the task of {estimating} $\left\{ {{c'_k},{t'_k}} \right\}_{k = 0}^{K - 1}$ associated with the sparse signal in (\ref{ht}). {In spectral estimation theory \cite{Stoica:1997}, it is well known that the sum of complex exponentials in (\ref{SOCE}) admits an autoregressive form which allows us to write, 
\begin{equation}
\label{AP}
\widehat h\left[ m \right] + \sum\limits_{k = 1}^K {r\left[ k \right]\widehat h\left[ {m - k} \right] = 0}. 
\end{equation}
The $K$--tap filter defined by $\{r\left[ k \right]\}$ is known as the \emph{annhilating filter} \cite{Stoica:1997,Vetterli:2002}} which is used to estimate the non-linear unknowns $\left\{ {{t'_k}} \right\}_{k = 0}^{K - 1}$ provided that $\{t'_k\}_{k=0}^{K-1}$ are distinct and $\widehat h\left[ m \right], m \in [-K,K]$ is known, thus implying,
\begin{chequation}
\begin{equation}
{f_c} \geqslant K.
\end{equation}
\end{chequation}
By combining conditions (C1) and (C2), we finally obtain, 
\begin{equation}
\label{FRIcon}
N \geqslant T/\Delta  + 1.
\end{equation}

Whenever (\ref{FRIcon}) holds, a recovery procedure from the FRI literature \cite{Vetterli:2002,Blu:2008} can be directly applied. To this end, (\ref{FRIcon}) guarantees that we can estimate the filter $r$ in (\ref{AP}) which is then used for constructing a polynomial of degree $K$, $$\mathsf{R}\left( z \right) = \sum\limits_{k = 0}^K {r\left[ k \right]{z^{ - k}}}.$$ The $K$--roots of this polynomial, that is, ${u_k} = {e^{-\jmath \omega_0 {t_k}/{b}}}$, encode the information about $\left\{ {{t_k}} \right\}_{k = 0}^{K - 1}$. Let ${\widetilde t_k}$ denote the estimate of $t_k$. Then, by factorizing $\mathsf{R}\left( z \right)$, we estimate the roots $\widetilde u_k$ which is used to estimate ${\widetilde t_k} = \left( {b/{\omega _0}} \right)\angle { \widetilde u_k}$. To determine $\left\{ {{c_k}} \right\}_{k = 0}^{K - 1}$ in (\ref{FRI}), we first construct the quadratic polynomial ${\sf{Q}}\left( \widetilde t_k \right) = \left( {a{\widetilde t_k^2} + 2p \widetilde t_k} \right)/2b$. There on, we estimate $\widetilde c_k$ by solving the least--squares problem since $c_k$'s in \eqref{SOCE} linearly depend on   known quantities.
\end{proof}

{
\subsection{Generalization to Arbitrary Bandlimited Sampling Kernels}
\label{sec:GABSK}

For the same recovery condition (cf.~(\ref{FRIcon})), our result straight-forwardly generalizes to any arbitrary, bandlimited sampling kernel of the form, 
\begin{equation}
\label{FBK}
\psi \in \mathcal{B}^{\Omega_m}_{\bsl_{\sf FT}}, \ {\varphi _{{\sf{BL}}}}\left( t \right) = \frac{1}{{\sqrt \Delta  {K_b}}}{m^*_{\bsls}}\left( t \right){e^{ - \jmath \frac{p}{b}t}} \psi \l \Delta^{-1}t \r
\end{equation}%
provided that $\psi \in \mathcal{B}^{\Omega_m}_{\bsl_{\sf FT}}$ and the Fourier transform of $\psi$ does not vanish in the interval $[-\Omega_m,\Omega_m]$. This is a consequence of a generalized version of Proposition~\ref{prp:BLS} (cf.~Section~\ref{prf:prp:BLS}). {Arbitrary bandlimited kernels} result in a version of (\ref{gn}),
\[
{g_n} = \sum\limits_{\left| m \right| \leqslant {f_c}} {\widehat h\left[ m \right]\widehat \psi \left( {\frac{\omega _0m\Delta}{b}} \right){e^{\jmath \frac{{{\omega _0}m}}{b}n\Delta }}}  \Leftrightarrow \nvec{g} = \bf{V D} \widehat{ \nvec{h}},
\]
where $\bf{D}$ is a diagonal matrix composed of Fourier series coefficients of $\psi$, that is, ${\{{\widehat \psi \left( {{\Delta \omega _0}m/b} \right)}\}_{\left| m \right| \leqslant {f_c}}}$. Given $\psi$, $\widehat{\nvec{h}}$ can be ``deconvolved'' using $\widehat{\nvec{h}} = \bf{D}^{-1} \bf{V}^+ \nvec{g}$ (cf.~Section~\ref{sec:int}).}

\subsection{Special cases of the SAFT domain result}

By appropriately selecting the parameter matrix $\bsls$, we can directly derive results for any of the operations described in Table~\ref{tab:1}. {Next, we revisit some examples in the literature which are special cases of the SAFT domain}. \\

{
\noindent{\bf Sparse or FRI Sampling in Fourier Domain}
 
\noindent The FRI sampling result which was derived in context of Fourier domain \cite{Vetterli:2002} is a special case of Theorem~\ref{thm:BLS}. By setting, 
\begin{equation*}
\bsls = \left[ {\begin{array}{*{20}{c}}
  0 & 1 &\vline & 0 \\ 
  -1 &0 &\vline & 0 
\end{array}} \right] \equiv  \pmb\Lambda_\mathsf{FT},
\end{equation*}
we note that $\mathsf{Q}\l t \r = 0$ and $s\l t \r = h\l t \r$ in \eqref{HT}. In this case {the} sampling rate is $\Delta = \pi/\Omega_m$ and provided that $N\geq T/\Delta + 1$, the sparse signal can be recovered from low-pass projections in the Fourier domain.} \\

{
\noindent{\bf Sparse Sampling in Fractional Fourier Domain}
 
\noindent Sparse sampling in context of the fractional Fourier domain was discussed in \cite{Bhandari:2010}. As above, this is a special case of Theorem~\ref{thm:BLS}. By setting, 
\begin{equation*}
\bsls = \left[ {\begin{array}{*{20}{c}}
  \cos\theta &   \sin\theta &\vline & 0 \\ 
  -\sin\theta &   \cos\theta &\vline & 0 
\end{array}} \right] \equiv  \pmb\Lambda_{\pmb{\theta}},
\end{equation*}
we note that $\mathsf{Q}\l t \r = \tfrac{t^2}{2}\cot\theta$ and as shown in \cite{Bhandari:2010}, $s\l t \r$ is given by,
\[
s\left( t \right) \DEq{HT} {e^{ - \jmath \frac{{{t^2}}}{2}\cot \theta }}\sum\limits_{n \in \mathbb{Z}} {\sum\limits_{k = 0}^{K - 1} {{c'_k}u_k^n} {e^{\jmath \frac{2\pi}{T}nt}}}, 
\]
where ${c'_k} = {c_k}{e^{\jmath \frac{{t_k^2}}{2}\cot \theta }}$. In this case {the} sampling rate is $\Delta = \pi\sin\theta/\Omega_m$ and provided that $N\geq T/\Delta + 1$, the sparse signal can be recovered from low-pass projections in the Fractional Fourier domain.}\\

\noindent{\bf Link with Super-resolution via Convex Programming} 

\noindent Our work is directly related to recent results on super-resolution based on convex-programming \cite{Candes:2013}. Let,
\[
{\left\| s \right\|_{{\sf{TV}}}} = \sum\limits_{k = 0}^{K - 1} {\left| {{c_k}} \right|}
\]
denote the ${\sf{ TV}}$--norm of the sparse signal $s$ (\ref{FRI}). Also note that, ${\left\| s \right\|_{{\sf{TV}}}} = {\left\| h \right\|_{{\sf{TV}}}}$. Given sampled measurements $\left\{ {y\left( {n\Delta } \right)} \right\}_{n = 0}^{N - 1}$ (\ref{BLS}) we first obtain $\left\{ {{g_n}} \right\}_{n = 0}^{N - 1}$ from (\ref{gn}). We then estimate $\widehat{\nvec{h}} = \bf{V}^+ \nvec{g}$. As a result, we may recast our sparse recovery problem as a minimization problem of the form, 
\begin{equation}
\label{tvmin}
\mathop {\min }\limits_h {\left\| h \right\|_{{\sf{TV}}}} \mbox{ with }{\left\{ {\widehat h\left[ m \right] = \int_0^T {h\left( t \right){e^{\jmath {\omega _0}mt}}dt} } \right\}_{\left| m \right| \leqslant {f_c} }}.
\end{equation}

Theorem~\ref{thm:BLS} assumes knowledge of $K$---a proxy for {\emph{sparsity}} or the {\emph{rate of innovation}}. Accordingly, the sampling criterion (\ref{FRIcon}) is based on a counting principle: the kernel bandwidth should be at least equal to the number of unknowns (cf.~(C2)). With $K$ known, in the absence of perturbations, the unknowns $\{t_k\}_{k=0}^{K-1}$ can be arbitrarily close. In contrast, (\ref{tvmin}) avoids any assumptions on $K$ but relies on a minimum separation principle \cite{Candes:2013}. 

Let ${\mathbb{T}_K} = \left\{ {{t_k}} \right\}_{k = 0}^{K - 1}$ denote the support of $s$ and let us define the minimum separation between any two entries of $\mathbb{T}_K$ by, 
\[\delta_{\rm{min}} \l {\mathbb{T}_K} \r \DE {\inf _{{\mathbb{T}_K}:{t_k} \ne {t_\ell }}}\left| {{t_k} - {t_\ell }} \right|.\]
We can now repurpose our generalized result in {the} context of super-resolution \cite{Candes:2013}. The formal result is as follows.

\begin{theorem}[Exact Recovery based on Minimum Separation Principle] Let ${\mathbb{T}_K} = \left\{ {{t_k}} \right\}_{k = 0}^{K - 1}$ be the support set of the sparse signal $s$ and ${f_c} = \left\lfloor {T/2\Delta } \right\rfloor$ be the cut-off frequency of the sampling kernel defined in (\ref{BLK}). If ${\delta _{{\text{min}}}}\left( {{\mathbb{T}_K}} \right){f_c} \geqslant 2$ then $h$ (and hence $s$) is a unique solution to (\ref{tvmin}).
\end{theorem}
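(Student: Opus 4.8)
The plan is to reduce the super-resolution claim to the known recovery guarantee of Candès--Fernandez-Granda \cite{Candes:2013} by exploiting the correspondence between $s$ and $h$ established in Theorem~\ref{thm:safs} and equation~\eqref{HT}. First I would recall that modulating $s$ by $e^{\jmath\mathsf{Q}(t)}$ produces the $T$-periodic sparse signal $h$ in \eqref{ht}, whose support set is $\mathbb{T}_K' = \{t_k'\}_{k=0}^{K-1} = \{t_k\}_{k=0}^{K-1} = \mathbb{T}_K$ since $t_k' = t_k$ from the relation below \eqref{ht}. The quadratic modulation changes only the amplitudes, $c_k' = c_k e^{\jmath\mathsf{Q}(t_k)}$, leaving $|c_k'| = |c_k|$, hence $\|s\|_{\sf{TV}} = \|h\|_{\sf{TV}}$ and, crucially, $\delta_{\rm{min}}(\mathbb{T}_K') = \delta_{\rm{min}}(\mathbb{T}_K)$. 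So the minimum-separation hypothesis on $\mathbb{T}_K$ transfers verbatim to $\mathbb{T}_K'$, and the recovery of $s$ from \eqref{tvmin} is equivalent to the recovery of $h$.

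Next I would verify that the measurement model in \eqref{tvmin} is exactly the low-pass Fourier-coefficient model for which \cite{Candes:2013} proves exact recovery. From Proposition~\ref{prp:BLS} the observed samples $y(n\Delta)$, after the deterministic demodulation $g_n = \Delta^{-1/2}y(n\Delta)e^{\jmath\mathsf{Q}(n\Delta)}$ in \eqref{gn}, are a discrete Fourier transform of the finite coefficient vector $\widehat{h}[m]$, $|m|\le f_c$, with $f_c = \lfloor T/2\Delta\rfloor$; inverting $\nvec{g} = \bf{V}\widehat{\nvec{h}}$ recovers $\widehat{h}[m] = \int_0^T h(t)e^{\jmath\omega_0 mt}\,dt$ for $|m|\le f_c$ (note $\omega_0 = 2\pi b/T$ makes the harmonics $e^{\jmath\omega_0 n t/b} = e^{\jmath 2\pi n t/T}$ as in \eqref{HT}). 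Thus the constraint set in \eqref{tvmin} is precisely ``$h$ is consistent with its first $f_c$ low-pass Fourier series coefficients.''

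Then I would invoke the main theorem of \cite{Candes:2013}: if a measure supported on $\mathbb{T}_K'$ on the torus satisfies $\delta_{\rm{min}}(\mathbb{T}_K')\cdot f_c \ge 2$ (in the appropriately normalized units where the period is rescaled to $1$ and the cutoff is $f_c$), then it is the unique minimizer of the total-variation norm among all measures matching the observed low-frequency coefficients. Here I would need to be careful with the normalization: \cite{Candes:2013} states the condition on the unit circle with integer cutoff, so I would rescale $t \mapsto t/T$, under which $\delta_{\rm{min}}$ scales by $1/T$ and the cutoff index stays $f_c$, giving the dimensionless condition $\delta_{\rm{min}}(\mathbb{T}_K)f_c/T \ge 2/T$, i.e.\ $\delta_{\rm{min}}(\mathbb{T}_K)f_c \ge 2$ after clearing $T$ — matching the statement. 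Applying the cited theorem, $h$ is the unique solution of \eqref{tvmin}; back-substituting through the invertible map $h \mapsto s$ (demodulation by $e^{-\jmath\mathsf{Q}(t)}$, which is a pointwise unimodular factor and hence bijective on measures) yields that $s$ is uniquely recovered as well.

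The main obstacle I anticipate is making the normalization and periodization bookkeeping airtight: \cite{Candes:2013} is stated for measures on the $1$-periodic torus with an integer frequency cutoff, whereas our $h$ is $T$-periodic with harmonics indexed through the factor $b$ in $\omega_0 = 2\pi b/T$, so I must check that the $b$ cancels correctly (it does, since the harmonics in \eqref{HT} are $e^{\jmath 2\pi nt/T}$, independent of $b$) and that the $f_c$ low-pass coefficients we actually measure are exactly the consecutive band $\{-f_c,\dots,f_c\}$ required by the cited result. A secondary point worth stating explicitly is that $s$ being a genuine discrete measure (finitely many Diracs) is not assumed a priori in \eqref{tvmin}; it emerges because the TV-minimizer over \emph{all} measures is unique and coincides with the discrete $h$, so no sparsity or knowledge of $K$ is used — exactly the advertised contrast with Theorem~\ref{thm:BLS}.
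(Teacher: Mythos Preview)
Your proposal is correct and follows exactly the approach the paper takes: the paper's entire proof is the single sentence ``The proof of this theorem follows from \cite{Candes:2013},'' and your reduction via the $s\leftrightarrow h$ correspondence of Theorem~\ref{thm:safs} together with the demodulation step \eqref{gn} is precisely the intended argument, spelled out in more detail than the paper itself provides. Your flagged caveat about normalization is apt --- in your rescaling step the right-hand side should be $2$, not $2/T$, so the condition $\delta_{\min}f_c\ge 2$ as stated tacitly assumes the period is normalized to $1$ (or that $\delta_{\min}$ is measured in units of $T$) --- but this is a statement-level bookkeeping issue, not a flaw in the proof strategy.
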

The proof of this theorem follows from \cite{Candes:2013}. Even though our discussion is quite general (due to Table~\ref{tab:1}), extension to \cite{Candes:2013} comes at no extra cost---the computational complexity and the exact recovery principle remain the same. A similar parallel can be drawn with the work on atomic norms \cite{Bhaskar:2013}.

\subsection{Sparse Signals and Smooth Time-limited Kernels}
\label{subsec:TLSS}

In the previous section, we focused on sampling kernels which were bandlimited (\ref{FBK}). However, in applications, {the sampling kernels may be pulses or echoes} that are time-limited \cite{Hernandez-Marin:2007,Bhandari:2014,Bhandari:2016a,Bhandari:2016b,Tur:2011,Bhandari:2015c}. {In this case}, we can model such a sampling kernel as a SAFS (\ref{safs}), 
\begin{align}
\label{TLK}
\psi\left( t \right) & \DEq{safs} \sum\limits_{n\in\mathbb{Z}} {\widehat \psi_{\bsls}\left[ n \right]\kappa_{\bslsi}^* \left( {n{\omega _0},t} \right)}, \quad \psi\l t \r =0, t \notin [0,T] \notag \\ 
			& \ = \ \sum\limits_{n\in\mathbb{Z}} {\widehat \psi_{\bsls}\left[ n \right]
			{e^{ - \jmath {\sf{Q}}\left( t \right)}} {\PH}\left( {n{\omega _0}} \right){e^{\jmath \frac{{n{\omega _0}t}}{b}}}},
\end{align} 
where $\PH \left( \omega  \right) = {e^{\jmath \frac{\omega }{b}\left( {dp - bq} \right)}}{e^{-\jmath \frac{{d{\omega ^2}}}{{2b}}}}$ and ${\sf{Q}}\l t \r$ is defined in (\ref{Qt}). As usual, the {filtered spikes} are given by $y\l t \r = \l s \SC \psi\r \left( t \right)$. A further simplification of the measurements is due to the following proposition which deals with time-limited kernels. 
\begin{proposition}[Time-limited Case] Let $s$ be the sparse signal defined in (\ref{FRI}) and the time-limited sampling kernel $\psi\left( t \right)$ be defined in (\ref{TLK}). Then, the {filtered spikes} simplify to
\begin{equation}
\label{TLS}
y\l t \r = e^{-\jmath\sf{Q}\l t \r} \sum\limits_{m\in\mathbb{Z}} \widehat{h}[m]
{\widehat \psi}_{\bsls} \left[ m \right] \PH \left( {m{\omega _0}} \right)
{e^{\jmath \frac{{{\omega _0}m}}{b}t}},
\end{equation}
where $\sf{Q}$ is defined in (\ref{Qt}) and $\widehat{h}$ is defined in (\ref{SOCE}).
\label{prp:TLS}
\end{proposition}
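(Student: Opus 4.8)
The plan is to compute the SAFT convolution $y = s \SC \psi$ directly, using the SAFS representation of the time-limited kernel $\psi$ in \eqref{TLK} together with the SAFS representation of the sparse signal $s$ established in Theorem~\ref{thm:safs}. First I would write both $s$ and $\psi$ in the ``demodulated'' form: by Theorem~\ref{thm:safs}, $s(t) = e^{-\jmath \mathsf{Q}(t)} h(t)$ with $h(t) = \sum_n \widehat h[n] e^{\jmath \omega_0 n t / b}$, and by \eqref{TLK}, $\psi(t) = e^{-\jmath \mathsf{Q}(t)} \sum_m \widehat\psi_{\bsls}[m]\, \PH(m\omega_0)\, e^{\jmath \omega_0 m t/b}$. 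The key observation is that the chirp factor $e^{-\jmath \mathsf{Q}(t)}$ is precisely (up to the linear phase bookkeeping already absorbed into $\mathsf{Q}$) the demodulation $m^*_{\bsls}(t) e^{-\jmath p t /b}$ that appears in the definition of $\SC$ in \eqref{sconv}, so pre-modulating $s$ and $\psi$ as required by the SAFT convolution turns them into genuine $T$-periodic (Fourier-series) objects, and the SAFT convolution collapses to an ordinary convolution of these periodic signals.

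Concretely, the steps in order: (1)~substitute the demodulated forms into \eqref{sconv}, i.e. $y(t) = K_b\, m^*_{\bsls}(t)\, (\upo{s} * \upo{\psi})(t)$ where $\upo{s}(t) = m_{\bsls}(t) s(t)$; track how the chirps $m_{\bsls}$ cancel against the $e^{-\jmath \mathsf{Q}}$ factors, leaving $h(t)$ and the periodic version of $\psi$ (this is the same bookkeeping used to pass from \eqref{safsdirac} to \eqref{hm}, so it should go through cleanly). (2)~Carry out the ordinary convolution of the two Fourier series: since $h(t) = \sum_n \widehat h[n] e^{\jmath \omega_0 n t/b}$ and the periodic kernel has Fourier coefficients $\widehat\psi_{\bsls}[m]\PH(m\omega_0)$, convolution multiplies coefficients mode-by-mode, producing $\sum_m \widehat h[m]\,\widehat\psi_{\bsls}[m]\,\PH(m\omega_0)\, e^{\jmath \omega_0 m t/b}$ — this is exactly where the convolution--multiplication property (Theorem~\ref{thm:saftconv}) is doing its work at the level of series coefficients. (3)~Re-apply the post-modulation $m^*_{\bsls}(t)$ (equivalently, multiply back by $e^{-\jmath \mathsf{Q}(t)}$) to recover $y(t)$ in the claimed form \eqref{TLS}.

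The main obstacle I expect is the careful accounting of the quadratic and linear phase terms: the convolution $\upo{s}*\upo{\psi}$ produces cross terms, and one must verify that after multiplying by $K_b\, m^*_{\bsls}$ all the chirp contributions recombine into a single factor $e^{-\jmath \mathsf{Q}(t)}$ and the correct $\PH(m\omega_0)$ weights, with no leftover $\omega$-dependent or $t$-dependent phase. The cleanest way to handle this is not to convolve in the time domain at all, but to take the SAFT of $y = s\SC\psi$, invoke Theorem~\ref{thm:saftconv} to get $\saft{y} = \PH(\omega)\,\saft{s}\,\saft{\psi}$, then use that $\saft{\psi}$ is supported on the SAFS frequency grid $\{m\omega_0\}$ with weights $\widehat\psi_{\bsls}[m]$ (by definition of the SAFS, since $\psi$ is $[0,T]$-time-limited) and that $\saft{s}$ sampled on this grid gives $\widehat s_{\bsls}[m] = \widehat h[m]$ up to the chirp prefactor from \eqref{SAFTS}; finally invert the SAFT using the SAFS synthesis formula \eqref{safs}. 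This routes the entire phase bookkeeping through the already-proven Theorems~\ref{thm:saftconv} and~\ref{thm:safs} and the orthogonality computation behind the SAFS, so the only genuinely new work is identifying $\widehat\psi_{\bsls}[m]\PH(m\omega_0)$ as the correct spectral weight, which follows by comparing \eqref{TLK} with \eqref{safs}. I would present the proof via this SAFT-domain route and relegate the equivalent time-domain chirp cancellation to a remark.
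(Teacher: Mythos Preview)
Your time-domain outline in steps (1)--(3) is exactly the paper's proof: pre-modulate $s$ and $\psi$ so that $\upo{s}(t)=e^{-\jmath pt/b}h(t)$ and $\upo{\psi}(t)=e^{-\jmath pt/b}z(t)$ with $z$ a genuine Fourier series; pull the common exponential $e^{-\jmath pt/b}$ through the convolution by the eigenfunction property; apply the Fourier-series convolution--product theorem to $(h*z)$; then post-modulate by $\K m^*_{\bsls}$ to recover the factor $e^{-\jmath\mathsf{Q}(t)}$. That route is correct and is what the paper does.

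The ``cleanest way'' you propose to actually present, however, has a real gap. You assert that $\saft{\psi}$ is supported on the discrete SAFS grid $\{m\omega_0\}$ because $\psi$ is time-limited. That is false: a compactly supported, non-trivial $\psi$ has a continuous SAFT spectrum, not a discrete one (this is the SAFT analogue of the uncertainty principle). The SAFS coefficients $\widehat\psi_{\bsls}[m]$ are samples of $\saft{\psi}$ at $\omega=m\omega_0$, not the support of $\saft{\psi}$. Consequently, the product $\PH(\omega)\,\saft{s}\,\saft{\psi}$ is a continuous function of $\omega$, and inverting it via the SAFS synthesis formula \eqref{safs} is not immediate: $y=s\SC\psi$ is not time-limited to $[0,T]$ either, so you cannot simply read off its SAFS coefficients by sampling $\saft{y}$ on the grid. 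The reason the time-domain route works is that it tacitly replaces $s$ and $\psi$ by their $T$-periodic (chirped) extensions and then uses \emph{periodic} convolution; that periodisation step is what your SAFT-domain argument is missing. Present the time-domain proof as your main argument and drop the SAFT-domain shortcut.
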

This result is proved in Section~\ref{prf:prp:TLS}. In defining $\psi$ in (\ref{TLK}), time-limitedness was the only assumed property for developing the SAFS representation. In practice one would expect $\psi$ to be a bounded, smooth function. From Fourier regularity conditions, we know that a function is bounded and $N$ times continuously differentiable provided that \cite{Eldar:2015}, 
\[\int {|\widehat \psi_{\bsl_{\sf FT}}\left( \omega  \right)|\left( {1 + {{\left| \omega  \right|}^N}} \right)d\omega }  < \infty.\]
In many cases, $\psi$ is a smooth and compactly supported pulse/function \cite{Hernandez-Marin:2007,Bhandari:2016a,Bhandari:2016b}. %Although never discussed in literature, 
The next proposition states a sufficient condition for smoothness (or differentiability) of a function $\psi$ in the SAFT domain. 
\begin{proposition}
\label{prp:decay}
(Smoothness and Decay) A function $f$ is bounded and $N$ times continuously differentiable with bounded derivatives provided that (ignoring constant $\K$)
\[\int {\left| {{{\left[ {{\sf L}\left( \omega  \right) + a\mathscr{D}} \right]}^N}\widehat f_{\bsls}\left( \omega  \right)} \right|d\omega }  < \infty,\]
where $\mathscr{D}$ is the usual derivative operator and ${\sf L}$ is some linear polynomial. The action of ${{\left[ {{\sf L}\left( \omega  \right) + a\mathscr{D}} \right]}^N}$ on $\widehat{f}_{\bsls}$ should be understood in the operator sense. 
\end{proposition}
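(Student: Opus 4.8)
The plan is to read the smoothness of $f$ off its inverse--SAFT representation \eqref{iSAFT}, $f(t)=\CB\int \widehat f_{\bsls}(\omega)\,\kappa_{\bsls}(t,\omega)\,d\omega$ (using $\kappa_{\bslsi}(\omega,t)=\kappa_{\bsls}^{*}(t,\omega)$), by pushing $N$ derivatives in $t$ under the integral sign and, at each step, trading a $t$--derivative of the kernel for a first--order $\omega$--operator acting on $\widehat f_{\bsls}$ via integration by parts. The computation that makes this work is a one--line differentiation of \eqref{saftkernel}:
\begin{align*}
\partial_t\kappa_{\bsls}(t,\omega)&=\tfrac{\j}{b}(\omega-at-p)\,\kappa_{\bsls}(t,\omega),\\
\partial_\omega\kappa_{\bsls}(t,\omega)&=\tfrac{\j}{b}(t-d\omega+dp-bq)\,\kappa_{\bsls}(t,\omega).
\end{align*}
Adding $a$ times the second identity to the first and using $ad-bc=1$ to cancel every term that contains $t$ produces the key relation
\[
\partial_t\kappa_{\bsls}(t,\omega)=\big({\sf L}(\omega)-a\,\partial_\omega\big)\kappa_{\bsls}(t,\omega),\qquad {\sf L}(\omega)\DE\j\big(cp-aq-c\omega\big),
\]
whose right--hand coefficient no longer depends on $t$; here $c=(ad-1)/b$ is determined by $\bsls$. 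As a sanity check, for $\bsls=\bsl_{\sf FT}$ one gets ${\sf L}(\omega)=\j\omega$ and $a=0$, so the hypothesis collapses to $\int|\omega|^{N}|\widehat f(\omega)|\,d\omega<\infty$, the classical Fourier smoothness condition recalled just above the proposition.

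Next I would iterate. Since $\partial_t$ commutes with the $\omega$--operator ${\sf L}(\omega)-a\partial_\omega$, the key relation gives $\partial_t^{N}\kappa_{\bsls}=\big({\sf L}(\omega)-a\partial_\omega\big)^{N}\kappa_{\bsls}$ in the operator sense, hence $f^{(N)}(t)=\CB\int \widehat f_{\bsls}(\omega)\,\big({\sf L}(\omega)-a\partial_\omega\big)^{N}\kappa_{\bsls}(t,\omega)\,d\omega$. Integrating by parts $N$ times transfers the whole operator onto $\widehat f_{\bsls}$, each factor ${\sf L}(\omega)-a\partial_\omega$ being replaced by its formal transpose ${\sf L}(\omega)+a\partial_\omega={\sf L}(\omega)+a\mathscr D$, so
\[
f^{(N)}(t)=\CB\int \Big[\big({\sf L}(\omega)+a\mathscr D\big)^{N}\widehat f_{\bsls}(\omega)\Big]\,\kappa_{\bsls}(t,\omega)\,d\omega .
\]
Because the exponential in \eqref{saftkernel} is purely imaginary, $|\kappa_{\bsls}(t,\omega)|$ is a constant independent of $t$ and $\omega$; taking absolute values gives $\sup_t|f^{(N)}(t)|\le|\CB|\,\|\kappa_{\bsls}\|_\infty\int\big|\big({\sf L}(\omega)+a\mathscr D\big)^{N}\widehat f_{\bsls}(\omega)\big|\,d\omega<\infty$, which is the claimed bound (up to the constant). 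Continuity of $f^{(N)}$ then follows from dominated convergence with the same integrable majorant, and boundedness of $f,f',\dots,f^{(N-1)}$ by re--running the argument with each $k\le N$ in place of $N$.

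The routine but delicate part --- where I expect the real work --- is the analysis that legitimizes the two interchanges. To differentiate under the integral at the $k$--th stage I need a $t$--independent integrable majorant for $\widehat f_{\bsls}(\omega)\big({\sf L}(\omega)-a\partial_\omega\big)^{k}\kappa_{\bsls}(t,\omega)$; expanding the operator power, each resulting term is a polynomial in $\omega$ times $\partial_\omega^{j}\widehat f_{\bsls}(\omega)$ with $j\le k$ times the unimodular kernel, so integrability reduces to polynomially weighted $L_1$ control on $\widehat f_{\bsls}$ and its derivatives --- precisely what the hypothesis encodes once the operator power is written out. Each integration by parts also leaves a boundary term $\big[\,\cdot\,\kappa_{\bsls}(t,\omega)\,\big]_{\omega\to-\infty}^{\omega\to+\infty}$ that must vanish; this holds because $\big({\sf L}+a\mathscr D\big)^{k}\widehat f_{\bsls}$ and its $\omega$--derivative are both integrable, which forces $\big({\sf L}+a\mathscr D\big)^{k}\widehat f_{\bsls}(\omega)\to0$ as $\omega\to\pm\infty$ while $|\kappa_{\bsls}|$ stays bounded. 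The only point needing care is to state the hypothesis so that all these intermediate integrabilities are guaranteed --- e.g. as $\int\big|\big({\sf L}(\omega)+a\mathscr D\big)^{k}\widehat f_{\bsls}(\omega)\big|\,d\omega<\infty$ for every $k\le N$, which in the Fourier case the weighting $(1+|\omega|^{N})$ supplies automatically --- after which the estimates above assemble into the claim.
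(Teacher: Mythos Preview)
Your proof is correct and reaches the same endpoint as the paper's, but the organization differs in a useful way. The paper works on the \emph{forward} side: it sets $Z_k(\omega)=\int f^{(k)}(t)\,\kappa_{\bsls}^*(t,\omega)\,dt$ (the SAFT of $f^{(k)}$), integrates by parts in $t$ to relate $Z_{k+1}$ to $Z_k$ and an auxiliary moment $X_k(\omega)=\int t\,f^{(k)}(t)\,\kappa_{\bsls}^*(t,\omega)\,dt$, then computes $\partial_\omega Z_k$ separately and eliminates $X_k$ through a $2\times2$ linear system, arriving at the same recursion $Z_{k+1}=[{\sf L}(\omega)+a\mathscr D]Z_k$; the bound $|f^{(k)}|\le\int|Z_k|$ then follows from the inverse representation. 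You instead work on the \emph{inverse} side and combine the two kernel derivatives up front into the single identity $\partial_t\kappa_{\bsls}=({\sf L}(\omega)-a\partial_\omega)\kappa_{\bsls}$, which performs the elimination of $t$ in one line and shifts the integration by parts to the $\omega$ variable, the transpose producing the $+a\mathscr D$ sign. Both routes rest on the same pair of differential identities for the kernel; yours packages them more compactly and is more explicit about the analytic justifications (dominated convergence, vanishing boundary terms), which the paper handles only implicitly by invoking Schwartz--type decay.
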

The proof of this proposition is based on recurrence relations and is presented in Section \ref{prf:prp:decay}. As an example, consider $N=1$. In this case (by using (\ref{Zk}) and (\ref{zkrec})), $\mathscr{T}_{\textrm{SAFT}} : f^{\l 1 \r}\l t \r \to{\sf{L}}\left( \omega  \right)\widehat f_{\bsls}\left( \omega  \right) + a\mathscr{D}\widehat f_{\bsls}\left( \omega  \right)$. In view of this result, 
\begin{align*}
  |{\mathscr{D} f}\left( t \right)| & \leq \int {\left| {{{\sf L}}\left( \omega  \right){\widehat f}_{\bsls}\left( \omega  \right) + a \mathscr{D} \widehat f_{\bsls}\left( \omega  \right)} \right|d\omega }  \hfill \\
   & \leq 
|| {\sf L}\widehat f_{\bsls}|{|_{{L_1}}} + \left| a \right|||\mathscr{D}\widehat f_{\bsls}||_{L_1} < \infty,
\end{align*}
where the last result is due to Minkowski's inequality. Note that for $N=1$, both $\widehat{f}_{\bsls}$ and $\mathscr{D}\widehat{f}_{\bsls}$ should be in $L_1$. For the case of {the} Fourier transform, we have $a=0$ (cf.~Table~\ref{tab:1}) and the result collapses to ${\left| \omega  \right|^N}|\widehat f_{\bsl_{\sf FT}}\left( \omega  \right)|$. 

{The} smoothness properties of kernels are of significant interest in the context of sampling and approximation theory \cite{Blu:1999}. While a detailed discussion is beyond the scope of this work, for sparse sampling,  Proposition~\ref{prp:decay} is enough to establish that $\widehat{\psi}_{\bsls}$ decays to zero whenever $\psi$ is a smooth kernel. Consequently, for some $f_c = M>0$, we have $\widehat{\psi}_{\bsls}[m] = 0, |m| > M$ and therefore, $M$ dictates the recovery bound for perfect reconstruction of $s$. Without loss of generality, by setting $\Delta = 1$ and,
\[\widehat{\psi}'_{\bsls} \left[ m \right] \DE \widehat{\psi}_{\bsls} \left[ m \right]\PH \left( {m{\omega _0}} \right),\]
the samples take the form of
$$g_n \DEq{TLS} \sum\limits_{|m| \leq M} \widehat{h}[m]
{\widehat \psi}'_{\bsls} \left[ m \right] 
{e^{\jmath \frac{{{\omega _0}m}}{b}n}}, n= 0,\ldots,N-1.$$
This problem is similar to the one in {Section~\ref{sec:GABSK}} and in view of Theorem~\ref{thm:BLS}, perfect recovery is guaranteed provided that $N\geq2M+1$ with $M\geq K$. 

\subsection{Sparse Signals and the Gabor Transform Kernel}
{Several recent works consider} recovery of sparse signals from Gabor transform measurements. Aubel \etal \cite{Aubel:2017} study this problem in the context of super-resolution. Similarly, {Matusiak \etal \cite{Matusiak:2012} studied the problem of sparse sampling using Gabor frames.} Recently, Eldar \etal \cite{Eldar:2015a} developed algorithms for recovery of sparse signals in the context of the phase retrieval. {Uniqueness guarantees with respect to phase retrieval problem for the Gabor transform were reported by Jaganathan \etal in \cite{Jaganathan:2016}.} In all of these cases, the results were developed for the Fourier domain ($\bsls = \bsl_{\sf{FT}}$). Here, we generalize the sparse recovery problem to the SAFT domain. For this purpose, we introduce the Gabor transform associated with the SAFT domain together with some basic mathematical properties. % and study recovery of sparse signals given its low-pass projections in Gabor transform domain.

\begin{definition}[Gabor Transform for the SAFT Domain] Let $f$ be a function with well defined SAFT and $\psi \in L_2$ be some window. We define the SAFT Gabor transform (SAFT-GT) by,
\begin{equation}
\label{SAFTGT}
\GT{f}{\psi}{\tau}{\omega} = \int {f\left( t \right)\psi\left( {t - \tau } \right)\kappa _{\bsls}^*\left( {t,\omega } \right)dt}.
\end{equation}
\end{definition}
Next, we {derive} the inversion formula linked with the SAFT domain. Without loss of generality, we assume that $\K = 1$.
\begin{proposition}[Inversion Formula for the SAFT-GT]%
\label{prp:IGT}
Let $f$ be some function with a well defined SAFT and $\psi_1,\psi_2 \in L_2$ be window functions. Furthermore, let $\GT{f}{\psi_1}{\tau}{\omega}$ denote the SAFT-GT of $f$ defined in (\ref{SAFTGT}). Provided that $b\left\langle {{\psi _1},{\psi _2}} \right\rangle  = 1$, the inverse SAFT-GT is defined by, 
\begin{equation}
\label{SAFTIGT}
f\left( t \right) = \iint_{\tau ,\omega } {\GT{f}{\psi_1}{\tau}{\omega}\psi_2\l t - \tau \r{\kappa ^*_{\bslsi}}\left( {\omega ,t} \right)d\omega d\tau }.
\end{equation}
\end{proposition}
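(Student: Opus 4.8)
The plan is to verify the inversion formula by substituting the definition \eqref{SAFTGT} of the SAFT-GT into the right-hand side of \eqref{SAFTIGT} and simplifying, using the reproducing-kernel structure of the SAFT together with the orthogonality condition $b\langle \psi_1,\psi_2\rangle = 1$. First I would plug in $\GT{f}{\psi_1}{\tau}{\omega} = \int f(t')\psi_1(t'-\tau)\kappa^*_{\bsls}(t',\omega)\,dt'$ so that the right-hand side becomes a triple integral over $t', \tau, \omega$ of $f(t')\psi_1(t'-\tau)\psi_2(t-\tau)\kappa^*_{\bsls}(t',\omega)\kappa^*_{\bslsi}(\omega,t)$. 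Recalling from the excerpt that ${\kappa _{\bslsi}}(\omega,t) = \kappa^*_{\bsls}(t,\omega)$, the $\omega$-integrand carries the kernel product $\kappa^*_{\bsls}(t',\omega)\kappa_{\bsls}(t,\omega)$ (up to conjugation bookkeeping). The key step is to integrate in $\omega$ first and recognize the resulting SAFT reproducing kernel: writing out the quadratic-phase form \eqref{saftkernel}, the Gaussian-in-$t,t'$ chirp prefactors are independent of $\omega$, and the $\omega$-dependent part is a pure linear-phase integral $\int e^{\jmath \omega (t-t')/b}\,d\omega$ which yields $2\pi b\,\delta(t-t')$ after accounting for the $1/b$ Jacobian inside the exponent. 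Combined with the normalization constants $|\K|^2 = 1/(2\pi b)$ (recall $\K=1$ is assumed, so here one tracks the $K_b$ magnitude), this collapses to $\delta(t-t')$ and the quadratic chirp factors evaluate to unity on the diagonal $t'=t$.

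Once the $\omega$-integration has produced $\delta(t-t')$, the $t'$-integral is trivial and sets $t'=t$, leaving $f(t)\int \psi_1(t-\tau)\psi_2(t-\tau)\,d\tau$. After the change of variable $u = t-\tau$ this is $f(t)\int \psi_1(u)\psi_2(u)\,du$. I would then need to reconcile this with the stated hypothesis $b\langle\psi_1,\psi_2\rangle = 1$: the inner product in the excerpt is defined as $\langle f,g\rangle = \int f g^*$, so strictly $\int \psi_1\psi_2\,du = \langle \psi_1,\overline{\psi_2}\rangle$ or one reads the windows as real-valued, in which case $\int\psi_1\psi_2 = \langle\psi_1,\psi_2\rangle$. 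Tracking the factor of $b$ carefully: the $\omega$-integral delta comes out as $\delta\big((t-t')/b\big) = b\,\delta(t-t')$, so the $b$ from the Jacobian cancels precisely against the $1/b$ in the product of normalization constants, and the leftover $\langle\psi_1,\psi_2\rangle$ is multiplied by exactly the $b$ appearing in the normalization $b\langle\psi_1,\psi_2\rangle=1$. Putting the pieces together gives $f(t)$, as claimed.

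The main obstacle I expect is the bookkeeping of the constant factors and complex-conjugation conventions: the SAFT kernel \eqref{saftkernel} carries the prefactor $\Ki$ (and its forward counterpart $\K$), the inverse transform \eqref{iSAFT} carries $\CB$, and there is the identity ${\kappa _{\bslsi}}(\omega,t) = \kappa^*_{\bsls}(t,\omega)$ to apply consistently; getting every $\jmath$, every $1/\sqrt{\jmath 2\pi b}$, and every sign in the quadratic exponent to line up so that the non-$\omega$ phases cancel on the diagonal is where errors would creep in. A secondary subtlety is justifying the interchange of the order of integration (Fubini) and the distributional manipulation of the $\omega$-integral as a Dirac comb — this is standard for the Fourier case and, since the SAFT is a unitary conjugate of the Fourier transform via the decomposition $\bsll = \mathbf{M}_1\bsl_{\sf FT}\mathbf{M}_2$ recalled in the excerpt, it carries over, but I would state it as holding in the distributional sense as is done elsewhere in the paper. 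Modulo these routine-but-delicate computations, the proof is a direct substitution-and-collapse argument.
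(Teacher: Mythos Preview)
Your proposal is correct and follows essentially the same route as the paper: substitute the SAFT-GT definition into the right-hand side of \eqref{SAFTIGT}, carry out the $\omega$-integral first to obtain (via the identity \eqref{roi}) a Dirac delta $b\,\delta(t-t')$ dressed with the chirp factor $e^{\jmath({\sf Q}(t')-{\sf Q}(t))}$ that vanishes on the diagonal, collapse the $t'$-integral, and then invoke $b\langle\psi_1,\psi_2\rangle=1$ on the remaining $\tau$-integral. The paper's proof is terser (it simply states the kernel inner-product identity and marginalizes), and it sidesteps your conjugation and constant-tracking worries by declaring $\K=1$ and treating the inner product formally, exactly the sort of routine bookkeeping you flagged.
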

A proof of this proposition is presented in Section~\ref{prf:prp:IGT}. 

In the context of the SAFT-GT, the sparse signal (\ref{FRI}), or alternatively (\ref{hm}), can be represented as, 
%
%Recovery of sparse signals from Gabor transform measurements,
\begin{align}
y\left( {\tau ,{\omega}} \right) & \DEq{SAFTGT} \underbrace{\PH^* \left( \omega  \right)\sum\limits_{k = 0}^{K - 1} {{c'_k}\psi \left( {{t_k} - \tau } \right){e^{-\jmath \frac{\omega }{b}{t_k}}}}}_{\GT{s}{\psi}{\tau}{\omega}}.
\label{gty}
\end{align}
{Let $y'\left( {t ,\omega } \right) = y\left( {t,\omega } \right)\Phi \left( \omega  \right)$ where $\PH$ is a modulation operation with $\PH \PH^* = 1$.}
\begin{equation}
\label{yt}
y' \l t,\omega \r \DE \sum\limits_{k = 0}^{K - 1} {{c'_k}\psi \left( {{t_k} - t } \right){e^{-\jmath \frac{\omega }{b}{t_k}}}}.
\end{equation}
Since $\psi$ is a known, smooth window, a finite Fourier series approximation---time scaled by $b$---suffices to approximate $\psi$, 
\[\widetilde \psi_M \left( t \right) \approx \sum\limits_{\left| m \right| \leqslant M} {\widehat\psi_{\bsl_{\sf FT}} \left[ m \right]{\kappa_{\bsl_{\sf FT}} ^*}\left( {t,{\omega _0}m/b} \right)}, \quad \omega_0 = 2\pi b/T. \]
Let $M = f_c$ and assume sampled measurements of the form, 
\[\left. {y\left( {\tau ,\omega } \right)} \right|_{\tau  = n{\Delta }}^{\omega  = \ell{\omega_0 }}, \ \ \left( {\tau ,\omega } \right) \in \left[ {0,N-1} \right] \times \left[ { - {f_c},{f_c}} \right].
\]
Next, we study the recovery of $s$ in two separate cases:
\begin{enumerate}[leftmargin=12pt,label={$\blacksquare$},itemsep = 8pt]

\item Case 1: Recovery with a fixed $\omega$. 

\noindent By fixing $\omega = \omega_0$ and $\psi \to \psi_M$,
\[y\left( {n\Delta ,{\omega _0}} \right) = \underbrace{{\PH ^*}\left( {{\omega _0}} \right)}_{\sf{Constant}}\sum\limits_{k = 0}^{K - 1} {\underbrace {{c'_k}{e^{-\jmath \frac{{{\omega _0}}}{b}{t_k}}}}_{{\app{c}_k}}\psi_M \left( {{t_k} - n\Delta } \right)} \]

where the weights reflect the effect of the SAFT, $\app{c}_k \DE c'_k{e^{-\jmath \frac{{{\omega _0}}}{b}{t_k}}} = {c_k}{e^{\jmath {\sf{Q}}\left( {{t_k}} \right)}}{e^{-\jmath \frac{{{\omega _0}}}{b}{t_k}}}.$ Finally, note that this is the classical sparse sampling problem (cf.~compare (\ref{sparsemodel})), 
\[y'\left( {n\Delta ,{\omega _0}} \right) = \sum\limits_{k = 0}^{K - 1} {\app{c}_k{{\overline \psi  }_M}\left( {n\Delta  - {t_k}} \right)}
 \]
where ${\overline \psi  _M} = {\psi _M}\left( { -  \cdot } \right)$ and {the result of Section~\ref{sec:GABSK}} can be directly extended to the case at hand. Provided that $f_c = M\geq K$ and $N\geq 2M+1$, perfect recovery is guaranteed. 

\item Case 2: Recovery using both $\l \tau,\omega \r$.

From (\ref{yt}) and $\psi_M \approx \psi$, we have, 
\begin{align}
y'\left( {n\Delta ,\ell {\omega _0}} \right)&  = \sum\limits_{k = 0}^{K - 1} {{c_k}\sum\limits_{\left| m \right| \leqslant M} { {{\widehat\psi_{\bsl_{\sf FT}}}} \left[m \right]{e^{\jmath \frac{{{t_k}}}{b}\left( {m{\omega _0} - \ell {\omega _0}} \right)}}{e^{ - \jmath \frac{{m{\omega _0}}}{b}n\Delta }}} } \notag\\
& = \sum\limits_{\left| m \right| \leqslant M} {{\app{y}_{m,\ell }}{e^{ - \jmath \frac{{m{\omega _0}}}{b}n\Delta }}}
\label{STSR}
\end{align}
where, ${\app{y}_{m,\ell }} \DE \sum\limits_{k = 0}^{K - 1} {{\widehat\psi_{\bsl_{\sf FT}}}} \left[ m \right]{c_k}{e^{\jmath \frac{{{t_k}}}{b}\left( {m{\omega _0} - \ell {\omega _0}} \right)}}.$
This development is the standard form (with regards to the Fourier domain) and the results in \cite{Aubel:2017} can be extended to solve for (\ref{STSR}). For a Gaussian window function, the exact recovery principle based on minimum separation condition is discussed in \cite{Aubel:2017} (cf.~Theorem 11) and applies to (\ref{STSR}). 
\end	{enumerate}

\section{Conclusion}
\label{sec:conclusion}
In this work, we considered the recovery of sparse signals from their low-pass/bandlimited projections in the SAFT domain. Since the SAFT parametrically generalizes a number of interesting transformations listed in Table~\ref{tab:1}, our work presents a unifying approach to the problem of sampling and recovery of sparse signals. Starting with a review of Shannon's sampling theorem for signals that are bandlimited in the SAFT domain, we developed conditions for exact recovery of sparse signals when, (1) sampling with arbitrary, bandlimited kernels, (2) sampling with smooth, time-limited kernels and, (3) recovery from Gabor transform measurements linked with the SAFT domain. By setting $\bsls = \bsl_{\sf{FT}}$ and $\bsl_{\theta}$, our results coincide with previously discussed results linked with the Fourier and Fractional Fourier domain, respectively.

%\appendices

\section{Appendix: Auxiliary Proofs and Computations}
\label{sec:app}

\subsection{Proof of Proposition~\ref{prp:BLS}}
\label{prf:prp:BLS} 

Here, we will prove a more general result. Let $\psi$ be an arbitrary function with a well defined Fourier transform and let $\varphi\l t  \r = \l {\Delta  {K_b^2}} \r^{-1/2}{m^*_{\bsls}}\left( t \right) {e^{ - \jmath \frac{p}{b}t}} \psi \l t \r$.
By using the definition of the SAFT-convolution (\ref{sconv}), we have
\begin{align}
y\l t \r & = \K m_{\bsls}^*\l t \r   \l \upo{s}*\upo{\varphi} \r \l t\r  \notag \\ 
&  \DA{a}\Delta_c m_{\bsls}^*\l t \r         \l e^{\jmath\tfrac{p}{b} t}h\l t \r * e^{\jmath\tfrac{p}{b} t}\psi\l t \r  \r \notag \\
& \DA{b} \Delta_c { m_{\bsls}^*\l t \r}   e^{\jmath\tfrac{p}{b} t}      \l h\l t \r * \psi\l t \r  \r \notag \\
& \DA{c} \Delta_c{e^{-\jmath\sf{Q}\l t \r}} \sum\limits_{m\in\mathbb{Z}} \widehat{h}[m]\widehat \psi\left( {\frac{{{\omega _0}m}}{b}} \right){e^{\jmath \frac{{{\omega _0}m}}{b}t}},
\label{dcv}
\end{align}
where $\Delta_c = 1/\sqrt\Delta$, ${\rm{(a)}}$ is due to (\ref{sconv}), ${\rm{(b)}}$ is due to invariance of complex exponentials under convolution operation (eigen-function property) and ${\rm{(c)}}$ is because $h$ is a $T$-periodic function (\ref{ht}). Here, $y\l t \r$ is completely characterized by the Fourier series coefficients of $h$ and $\psi$. With $\psi\l t \r = \sincD{t}$, we have, $\widehat \psi\left( \omega  \right) =\Delta {\ind_{\left[ { - \frac{\pi }{\Delta },\frac{\pi }{\Delta }} \right]}}\left( \omega  \right)$ where  $\omega  = \omega_0 m/b$ and since $\widehat \psi\left(m \omega_0 \Delta / b  \right) = 0, m >T/2\Delta$, (\ref{BLS}) holds.

\subsection{Proof of Proposition~\ref{prp:TLS}}
\label{prf:prp:TLS}

We will start by developing $\upo{s}*\upo{\psi}$ which appears in the definition of the convolution operator (\ref{sconv}). Note that, 
\begin{align*}
\up{s}{t}  & =  {e^{ - \jmath \frac{p}{b}t}}h\left( t \right), \quad \mbox{ and } \\
\up{\psi}{t}    & ={e^{ - \jmath \frac{p}{b}t}}\sum\limits_{n \in \mathbb{Z}} {{{\widehat \psi}_{\bsls} \left[ n \right]  {\PH}\left( {n{\omega _0}} \right)} {e^{\jmath \frac{{n{\omega _0}}}{b}t}}}.
\end{align*}
By letting $\widehat{z}\left[ n \right] \DE {\widehat \psi}_{\bsls} \left[ n \right] \Phi \left( {n{\omega _0}} \right)$, we may re-write $\up{\psi}{t}$ in terms of Fourier series of $z\l t \r$, 
\[
\up{\psi}{t}  ={e^{ - \jmath \frac{p}{b}t}}\sum\limits_{n \in \mathbb{Z}} \widehat{z}\left[ n \right] {e^{\jmath \frac{{n{\omega _0}}}{b}t}} \equiv {e^{ - \jmath \frac{p}{b}t}} z\l t \r.
\]
Based, on this, we now develop, 
\begin{align*}
( \upo{s}*\upo{\psi} ) \l t \r &  =  {e^{ - \jmath \frac{p}{b}t}}h\left( t \right)*{e^{ - \jmath \frac{p}{b}t}}z\left( t \right) \hfill \\
&   \DA{a} {e^{ - \jmath \frac{p}{b}t}}\left( {h*z} \right)\left( t \right) \hfill \\
&   \DA{b}  {e^{ - \jmath \frac{p}{b}t}} T \sum\limits_{m \in \mathbb{Z}} {\widehat h\left[ m \right]\widehat z\left[ m \right]{e^{\jmath \frac{{m{\omega _0}}}{b}t}}},
\end{align*}
where ${\rm{(a)}}$ is due to invariance of complex exponentials under convolution operation (eigen-function property) and ${\rm{(b)}}$ is due to convolution-product theorem for the Fourier series. For simplicity, let us assume that $T\K = 1$. Then, we have, 
\begin{align}
y\l t \r & = \K m^*_{\bsls}\l t\r {e^{ - \jmath \frac{p}{b}t}} T \sum\limits_{m \in \mathbb{Z}} {\widehat h\left[ m \right]\widehat z\left[ m \right]{e^{\jmath \frac{{m{\omega _0}}}{b}t}}}, \notag \\
& = e^{-\jmath\sf{Q}\l t \r} \sum\limits_{m\in\mathbb{Z}} \widehat{h}[m]
{\widehat \psi}_{\bsls} \left[ m \right] \Phi \left( {m{\omega _0}} \right)
{e^{\jmath \frac{{{\omega _0}m}}{b}t}},
\end{align}
{which} completes our proof.

\subsection{Proof of Proposition~\ref{prp:decay}}
\label{prf:prp:decay}

In order to prove this result, {we begin with the observation}, 
\begin{equation}
\label{FW}
\left| {{f^{\left( k \right)}}\left( t \right)} \right| = \left| {\int {{F_k}\left( \omega  \right)\kappa _{\bslsi}^*\left( {\omega ,t} \right)d\omega } } \right| \leqslant \int {\left| {{F_k}\left( \omega  \right)} \right|d\omega }  < \infty, 
\end{equation}
where ${{F_k}\left( \omega  \right)}$, expressed as a function of $\saft{f}$, is the SAFT of ${{f^{\left( k \right)}}\left( t \right)}$ which is yet to be determined. In analogy to the Fourier transform, $F_k\l \omega \r \propto \l \jmath \omega \r^k \widehat{f}_{\bsl_{\sf{FT}}}\l \omega \r$. To set up this proof, we will start with defining smooth functions. Let $f$ be some function with norm defined as $\forall m,n \in {\mathbb{Z}_ + } \cup \left\{ 0 \right\},{\left\| f \right\|_{m,n}} = {\sup _{t \in \mathbb{R}}}\left| {{t^m}{f^{\left( n \right)}}\left( t \right)} \right|$. Then, we say $f$ is smooth or $f \in {\sf{S}}$ provided that ${\left\| f \right\|_{m,n}} < \infty$. For functions $u$ and $v$ bounded in this norm, integration by parts results in, 
\begin{equation}
\label{IBP}
\int {{u^{\left( 1 \right)}}\left( t \right)v\left( t \right)dt}  =  - \int {u\left( t \right){v^{\left( 1 \right)}}\left( t \right)dt}.
\end{equation}
Let $u\left( t \right) \DE {f^{\left( k \right)}}\left( t \right)$ and $v\left( {\nvec{r}} \right) \DE \kappa _{\bsls}^*\left( \nvec{r} \right)$ where $\nvec{r} = \left[ t \ \ \omega \right]^{\top}$. Also note {two useful} relations that will be used shortly, 
\begin{align}
  {\partial _t}v\left( \nvec{r} \right) &= \beta \left( {p - \omega  + at} \right) v\left( \nvec{r} \right) \hfill \\
  {\partial _\omega }v\left( \nvec{r} \right) & = \beta\left( {\left( {d\omega  - \mu } \right) - t} \right)v\left( \nvec{r} \right), \ \  \mu \DE dp-bq, 
\label{vw}
\end{align} 
where $\beta = \jmath/b$. Next, with $Z_0\l \omega \r \DE \widehat{f}_{\bsls}\l \omega \r$, let us define a sequence of functions $\{Z_k\}, k \in \mathbb{Z}_+ \cup \{0\}$,
\begin{equation}
\label{Zk}
{Z_k}\left( \omega  \right) \DE \int {{f^{\left( k \right)}}\left( t \right)v\left( {t,\omega } \right)dt}  \equiv \int {{f^{\left( k \right)}}\left( t \right){\kappa ^*_{\bsls}}\left( {t,\omega } \right)dt}.
\end{equation}
Similarly, we also define $\{X_k\}, k \in \mathbb{Z}_+ \cup \{0\}$,
\[{X_k}\left( \omega  \right) \DE \int {t{f^{\left( k \right)}}\left( t \right)v\left( {t,\omega } \right)dt}.\]
Thanks to the sequences $\left\{ {{Z_k},{X_k}} \right\}$, (\ref{IBP}) can be reduced to the following recursive form, 
\begin{equation}
\label{el1}
%\boxed
{{Z_{k + 1}}\left( \omega  \right) \DEq{IBP} \beta \left( {\omega  - p} \right){Z_k}\left( \omega  \right) - a\beta {X_k}\left( \omega  \right).}
\end{equation}
Our result relies on $Z_k$ and hence, we must eliminate $X_k$. We do so by observing that, 
\begin{align}
  Z_k^{\left( 1 \right)}\left( \omega  \right) & = \int {{f^{\left( k \right)}}\left( t \right)\underbrace {{\partial _\omega }v\left( {t,\omega } \right)}_{{\text{Replace by (\ref{vw})}}}dt}  \hfill \notag \\
   & = \int {{f^{\left( k \right)}}\left( t \right)\beta \left( {\left( {d\omega  - \mu } \right) - t} \right)v\left( {t,\omega } \right)dt}  \hfill \notag \\
  & = {\beta \left( {d\omega  - \mu } \right){Z_k}\left( \omega  \right) - \beta {X_k}\left( \omega  \right)}.
\label{el2}
\end{align}
We now solve for $Z_k$ from the system of equations (\ref{el1}), (\ref{el2}), 
\[\left[ {\begin{array}{*{20}{c}}
  {{Z_{k + 1}}\left( \omega  \right)} \\ 
  {{Z_k}\left( \omega  \right)} 
\end{array}} \right] = \beta \left[ {\begin{array}{*{20}{c}}
  {\omega  - p}&{ - a} \\ 
  {d\omega  - \mu }&1 
\end{array}} \right]\left[ {\begin{array}{*{20}{c}}
  {{Z_k}\left( \omega  \right)} \\ 
  {{X_k}\left( \omega  \right)} 
\end{array}} \right]\]
which leads to a simple recursive equation, 
\begin{align*}
{Z_{k + 1}}\left( \omega  \right) & = {\sf L}\left( \omega  \right){Z_k}\left( \omega  \right) + aZ_k^{\left( 1 \right)}\left( \omega  \right),\\
& = \left[{{\sf L}}\left( \omega  \right) + a\mathscr{D}\right] Z_k\l \omega \r
\end{align*}
where ${{\sf L}}\left( \omega  \right)$ is linear polynomial ${{\sf L}}\left( \omega  \right) = \beta \left( {\omega bc + a\mu  - p} \right)$ with $\mu = dp-bq$ and is completely characterized by $\bsls$. Now since $Z_0\l \omega \r = \widehat{f}_{\bsls}\l \omega \r$, we observe that, 
\begin{equation}
\label{zkrec}
{Z_k}\left( \omega  \right) = {\left[ {{\sf L}\left( \omega  \right) + a\mathscr D} \right]^k}\widehat f_{\bsls}\left( \omega  \right), \quad Z_0\l \omega \r = \widehat{f}_{\bsls}\l \omega \r
\end{equation}
and by definition (\ref{Zk}), $F_k = Z_k$. Back substituting $Z_k$ in (\ref{FW}) leads to the result of Proposition~\ref{prp:decay}.

\subsection{Proof of Proposition~\ref{prp:IGT}}
\label{prf:prp:IGT}

Let us assume (\ref{SAFTIGT}) is true. Furthermore, we have, 
\begin{equation}
\label{roi}
\left\langle {{\kappa ^*_{\bsls}}\left( {x,\omega} \right),{\kappa _{\bslsi}}\left( {\omega ,t} \right)} \right\rangle  = {e^{\jmath \left( {{\sf Q}\left( x \right) - {\sf Q}\left( t \right)} \right)}}\underbrace {\int {{e^{ - \jmath \frac{\omega }{b}\left( {x - t} \right)}}d\omega } }_{b\delta \left( {x - t} \right)}.
\end{equation}
Next, we substitute $\GT{f}{\psi}{\tau}{\omega}$ in (\ref{SAFTIGT}) to obtain,
\[\iint\limits_{\tau,x} {f\left( x \right){\psi _1}\left( {x - \tau } \right){\psi _2}\left( {x - \tau } \right)}\left\langle {{\kappa ^*_{\bsls}}\left( {x,\omega} \right),{\kappa_{\bslsi}}\left( {\omega ,t} \right)} \right\rangle dxd\tau.\]
Thanks to (\ref{roi}), by marginalizing $\omega$ and then $x$, the last equation yields $bf\left( t \right)\left\langle {{\psi _1},{\psi _2}} \right\rangle = I_f$. Setting $I_f = f$ verifies the result.

\ifCLASSOPTIONcaptionsoff
  \newpage
\fi

\bibliographystyle{IEEEtran}

% Generated by IEEEtran.bst, version: 1.14 (2015/08/26)

\end{document}